\newcommand{\vast}{\bBigg@{4}}
\newcommand{\Vast}{\bBigg@{5}}
\definecolor{mygreen}{rgb}{0.31, 0.78, 0.47}
\renewcommand*{\backref}[1]{}
\newcommand{\cD}{\mathcal{D}}
\newcommand{\cE}{\mathcal{E}}
\newcommand{\cH}{\mathcal{H}}
\newcommand{\cI}{\mathcal{I}}
\newcommand{\N}{\mathcal{N}}
\newcommand{\bA}{\boldsymbol{A}}
\newcommand{\bB}{\boldsymbol{B}}
\newcommand{\bC}{\boldsymbol{C}}
\newcommand{\boldF}{\boldsymbol{F}}
\newcommand{\bI}{\boldsymbol{I}}
\newcommand{\bQ}{\boldsymbol{Q}}
\newcommand{\bR}{\boldsymbol{R}}
\newcommand{\bV}{\boldsymbol{V}}
\newcommand{\bX}{\boldsymbol{X}}
\newcommand{\ba}{\boldsymbol{a}}
\newcommand{\bb}{\boldsymbol{b}}
\newcommand{\boldf}{\boldsymbol{f}}
\newcommand{\bk}{\boldsymbol{k}}
\newcommand{\bl}{\boldsymbol{\ell}}
\newcommand{\br}{\boldsymbol{r}}
\newcommand{\bv}{\boldsymbol{v}}
\newcommand{\bs}{\boldsymbol{s}}
\newcommand{\bt}{\boldsymbol{t}}
\newcommand{\bx}{\boldsymbol{x}}
\newcommand{\by}{\boldsymbol{y}}
\newcommand{\bz}{\boldsymbol{z}}
\newcommand{\balpha}{\boldsymbol{\alpha}}
\newcommand{\bbeta}{\boldsymbol{\beta}}
\newcommand{\bgamma}{\boldsymbol{\gamma}}
\newcommand{\bSig}{\boldsymbol{\Sigma}}
\newcommand{\bmu}{\boldsymbol{\mu}}
\newcommand{\bzeta}{\boldsymbol{\zeta}}
\def\Zero{\boldsymbol{0}}
\newcommand{\tr}{\text{Tr}}
\newcommand{\vc}{\text{vec}}
\newcommand{\col}{\text{col}}
\newcommand{\ie}{i.e.}
\newcommand{\eg}{e.g.}
\newcommand{\expec}{\mathbb{E}}
\DeclareMathOperator*{\sym}{sym}
\DeclareMathOperator*{\modop}{mod}
\newtheorem{theorem}{Theorem}%[section]
\definecolor{darkgreen}{rgb}{0., 0.4, 0.}
\definecolor{amber}{rgb}{1.0, 0.49, 0.0}
\definecolor{orange}{rgb}{1.0, 0.4, 0.0}
\title{Graph topology inference with derivative-reproducing property in RKHS: \\ algorithm and convergence analysis}
\author{Mircea~Moscu, Ricardo~A.~Borsoi, C\'edric~Richard,
Jos\'e-Carlos~M.~Bermudez
\thanks{This work was funded in part by ANR under grant ANR-19-CE48-0002.}
\thanks{M. Moscu is with the  Universit\'e  C\^ote  d'Azur,  Nice, France (e-mail: \mbox{mircea.moscu@oca.eu}), Lagrange Laboratory (CNRS, OCA).}

\thanks{R.A. Borsoi is with the Department of Electrical Engineering, Federal University of Santa Catarina (DEE--UFSC), Florian\'opolis, SC, Brazil, and with the Lagrange Laboratory, Universit\'e  C\^ote  d'Azur, Nice, France. e-mail: \mbox{raborsoi@gmail.com}.}
\thanks{C. Richard is with the  Universit\'e  C\^ote  d'Azur,  Nice, France (e-mail: \mbox{cedric.richard@unice.fr}), Lagrange Laboratory (CNRS, OCA).}
\thanks{J.-C.M. Bermudez is with the DEE--UFSC, Florian\'opolis, SC, Brazil. \mbox{e-mail}: \mbox{j.bermudez@ieee.org}.}
}
\begin{document}
%\listoftodos
\maketitle

\setcounter{tocdepth}{3}
\setcounter{secnumdepth}{5}
%\tableofcontents

\newpage

\begin{abstract}
In many areas such as computational biology, finance or social sciences, knowledge of an underlying graph explaining the interactions between agents is of paramount importance but still challenging. Considering that these interactions may be based on nonlinear relationships adds further complexity to the topology inference problem. Among the latest methods that respond to this need is a topology inference one proposed by the authors, which estimates a possibly directed adjacency matrix in an online manner. Contrasting with previous approaches based on linear models, the considered model is able to explain nonlinear interactions between the agents in a network.
The novelty in the considered method is the use of a derivative-reproducing property to enforce network sparsity, while reproducing kernels are used to model the nonlinear interactions.
The aim of this paper is to present a thorough convergence analysis of this method. The analysis is proven to be sane both in the mean and mean square sense. In addition, stability conditions are devised to ensure the convergence of the analyzed method.
\end{abstract}

\begin{IEEEkeywords}
convergence analysis, kernel least mean squares, nonlinear topology inference, partial derivative sparsity
\end{IEEEkeywords}

\section{Introduction}
Knowledge of the network topology in applications such as gene regulation systems~\cite{FrazerSecond}, socio-economic interactions \cite{heiberger2018predicting}, or brain activity \cite{KramerEmergent} is of utmost importance. The main reason is that processing these data via, \eg, filtering \cite{sandryhaila2013discrete_filtering,nassif2017graph,chen2016multitask,coutino2018advances,koppel2017decentralized2} or spectral analysis \cite{Sandryhaila2013discrete,tremblay2016accelerated}, requires the network structure~\cite{cooperative2018djuric}.
Since most graph signal processing algorithms assume that the graph topology is known beforehand, significant progress has been made recently in the estimation of the graph topology from available data, be them links between regions of the brain, genes in a network, or sectors of a market economy. Moreover, the presence of nonlinear interactions in real-world applications led to the development of more general algorithms. One such proposition is the graph topology inference algorithm recently proposed by the authors~\cite{moscu2020online}, which is based on derivative-reproducing property to infer nonlinear interactions and promote sparsity. The current paper builds upon this preliminary work by proposing an in-depth and complex stability and performance analysis of this algorithm.
The ability of reproducing kernels to model nonlinear relationships between nodal signals is employed in the graph inference process, alongside with kernel dictionaries to mitigate the increasing number of kernel functions due to the online setting, while distributing the computational burden over the agents in the network. Many real world examples, such as social graphs \cite{speriosu2011twitter}, show considerable edge sparsity, which needs a \mbox{sparsity-inducing} framework based on derivative-reproducing kernels.
The present paper briefly recalls the considered algorithm and follows up with a complete convergence analysis, both in the mean and mean-square sense.

\subsection{Prior works}
Early advances in topology inference are put forward in \cite{dempster1972covariance}, where log-likelihood-based method of covariance estimation is introduced. Similarly, in \cite{friedman2008sparse} the graphical Lasso is employed in order to estimate the precision matrix from the available data. 
In \cite{segarra2017network}, the authors advocate that connectivity can be recovered from estimated spectral templates, while \cite{sardellitti2019graph} adapts the method for band-limited signals, \ie, signals whose Graph Fourier Transform is sparse. The authors of \cite{shafipour2017network} introduce a method designed for topology inference for the case when the measured signals are non-stationary.
An online adaptive algorithm is developed in \cite{moscu2019learning}, where the authors assume a linear model governing the interactions. Linearity of interactions and signal stationarity are assumed in \cite{shafipour2019online} to devise an ADMM algorithm.

The modeling of nonlinear interactions has proved to be a relevant research topic. Some existing methods of higher-order link inference yield satisfactory results when applied in optimizing power grids \cite{coutino2020self} or in predicting social behavior \cite{yang2020learning}.
In the presence of nonlinear phenomena, works such as \cite{Harring2012,Finch2015} focus on polynomial structural equation models, while the authors of \cite{lim2015operator} use their nonlinear counterparts. They, however, have some limitations, such as assuming prior knowledge of certain connections or the form of the nonlinear basis functions. Reproducing kernels have seen extensive use in graph topology inference problems. 
Such a solution was developed in \cite{lippert2009kernel}, where an unsupervised kernel-based method is implemented. One particularity of the algorithm is that it requires, as a parameter, the number of sought edges. It also offers the possibility of statistical significance testing when setting this parameter. 
Another work is \cite{shen2017topology} where kernels, chosen to best fit the available data, model nonlinear relationships between nodes based on measurements at successive time instants. %The authors present an auto-regressive framework that allows to track graph connectivity over time, proving useful in providing insights on brain connectivity. 
The multi-kernel approach in \cite{zhang2017going} uses partial correlations to encode graph topology and $\ell_p$-norm regression to enhance the performance. 
In \cite{giannakis2018topology}, a thorough analysis of the kernel-based topology inference problem is given. This work focuses on capturing nonlinear and dynamic links. A review of state-of-the art methods in topology inference is in \cite{dong2019learning}.

None of these works, however, bases its rationale on kernel derivatives. Moreover, no analysis of this family of methods exists in the literature. For these reasons, we put forth the current paper, which first recalls the algorithm and then proposes an in-depth convergence analysis.

\subsection{Notations and definitions}
Normal font letters denote scalars, while boldface lowercase and uppercase letters stand for column vectors and matrices, respectively.
Uppercase calligraphic letters denote sets. Their cardinality is denoted by $\vert\cdot\vert $. The $(u,v)$-th entry of a matrix $\bX$ is denoted by $[\bX]_{u,v}$ or by $x_{uv}$. Notation $\balpha = \col \{\{\alpha_p\}_p\}$ denotes the column vector obtained by stacking all entries $\alpha_p$ for all available indexes $p$.
Finally, $\expec\{\cdot\}$ is the expectation operator. 

A graph $\mathcal{G}$ consists of a set $\N$ of $(N+1)$ nodes, and a set~$\cE$ of edges such that if nodes $m$ and $n$ are linked, then we have $(m,n) \in \cE$. For undirected graphs, these node pairs are unordered. At node level, we collect a \mbox{real-valued} signal and we organize it as a column vector $[y_1(i), \ldots, y_{N+1}(i)]^\top$, where $y_n(i)$ is the signal sample at node $n$ and time instant $i$. The adjacency matrix $\bA$ \cite{biggs1993algebraic, sandryhaila2014big}, is defined as an $(N+1) \times (N+1)$ matrix whose entries $a_{nm}$ are zero if $(m,n) \notin \cE$ and set to one otherwise. Throughout the paper we shall consider that there is no self-loop, \ie, $a_{nn}=0,$ for all $n$.

\subsection{Paper outline}

The paper is organized as follows. Sections \ref{sect:problem} and \ref{sect:online} recall the problem previously considered in \cite{moscu2020online} and the online algorithm, respectively. Section \ref{sect:analysis} then proceeds with the main contribution of the current paper, consisting of the analysis of the algorithm, both in mean and mean square sense. In order to validate the pertinence of the analysis, Section \ref{sect:experimental} showcases a set of experimental simulations. Finally, a set of concluding remarks is given in Section \ref{sect:conclusion}.  

\section{Local problem formulation}
\label{sect:problem}

Consider an $(N+1)$-node graph with adjacency matrix $\bA$ that models a system such as a brain network or a power grid. In this context, the electrical activity of different brain regions \cite{rubinov2010complex,Shen2016NonlinearSV}, or the voltage angle per bus \cite{zhang2017going}, numbered from $1$ to $(N+1)$, can be measured sequentially at different time instants $i\in\mathbb{N}$, leading to a graph signal given by $ [y_1(i), \ldots, y_{N+1}(i)]^\top\in\mathbb{R}^{N+1}$ that changes over time. Signal $y_n(i)$ at each node $n=1,\ldots,N+1$ of the graph is nonlinearly coupled to the signals at all nodes in its neighborhood, according to the topology described in $\bA$. Moreover, we assume that nonlinear relationships between signals at different nodes occur as have been reported in many applications; see, e.g., the case of brain connectivity~\cite{freeman1979eeg, de2009hemodynamic}.
% The signal at each node influences and is influenced by the signals at the other nodes, with nonlinear relationships being reported in many applications such as, e.g., in the case of brain connectivity~\cite{freeman1979eeg, de2009hemodynamic}. The links between the signals at different nodes are then encoded in the matrix $\bA$.

Recent methods have considered additive nonlinear models~\cite{shen2017kernel,moscu2019online}, which represent the local measurements at each node $n$ as:
\begin{equation}
y_n(i) = \sum_{m \in \N \setminus \{n\} } f_{nm}(\by_{m}(i)) + \varrho_n(i)\,,
\label{eq:model_central_old}
\end{equation}
where $\by_{m}(i)=\big[y_m(i),\ldots,y_m(i-L_m+1)\big]^\top$, for $L_m\geq1$. 
Parameter $L_m$ endows the algorithm with memory-like capacities. This characteristic is desirable in a number of applications such as brain topology inference where there is a 10--20~ms delay in signal propagation between nodes~\cite{petkoski2019transmission}.
Functions $f_{nm}:\mathbb{R}^{L_m}\to\mathbb{R}$ represent the interactions between the different nodes in the network, and $\varrho_n(i)$ denotes innovation noise. By relating how each node $m\in\N \setminus \{n\}$ affects node $n$, functions $f_{nm}$ encode the connectivity in $\bA$ directly as $a_{nm}=0$ if, and only if, $f_{nm}\equiv0$, where $a_{nm}$ is the $(n,m)$-th entry of $\bA$.
For ease of notation, we shall assume that $n\equiv(N+1)$, i.e., we identify $n$ as the $(N+1)$-th node of the graph, which allows us to denote $\N\setminus \{n\}=\{1,\ldots,N\}$.
%
% \begin{color}{darkgreen}
% The quadratic-cost-based optimization problem associated with the previous model is developed in \cite{moscu2019online} and analyzed in \cite{moscu2021convergence}.
% \end{color}
%
The topology inference problem then consists of finding the set of functions $f_{nm}$ which best represent the available graph signal measurements $y_n(i)$ based on the previous model. This can be performed using both batch-based (see, e.g.,~\cite{shen2017kernel}) and online strategies~\cite{moscu2019online,moscu2021convergence,shen2018online}.

%\subsection{A nonparametric approach for nonlinear nodal interactions}

Models such as \eqref{eq:model_central_old} do not consider general nonlinear interactions between multiple nodes, as they assume an additive model for $y_n(i)$~\cite{buja1989linearSmootherAdditiveModels}. This can be rather limiting since the nonlinearity introduced in the model through the functions $f_{nm}$ is {local}, i.e., it only acts on the signals at each node individually.
To overcome this issue, we propose to consider the following general nonlinear model:
\begin{equation}
    y_n(i) = f_n(\by(i)) + \varrho_n(i)\,,
    \label{eq:model_central}
\end{equation}
where $\by(i)=[\by_{1}(i)^\top,\ldots,\by_{N}(i)^\top]^\top$ and function $f_n$ describes the interaction between the signals at nodes $\N\setminus\{n\}$ and the signal $y_n(i)$ at node $n$. Compared to \eqref{eq:model_central_old}, model \eqref{eq:model_central} captures more complex relationships between nodes without relying on generalized linear model, rendering it more general.

The method we propose relates $f_n$ to the underlying graph topology $\bA$ by quantifying how a node $m$ affects node $n$ through the corresponding partial derivative, \ie:
\begin{align}
    \parbox{25ex}{node $m$ does not affect\\\centering node $n$ (\ie, $a_{mn}=0$)}
    % \text{node $m$ does not influence $n$}
    \, \Longleftrightarrow \, %\expec\bigg\{
    \bigg\| \frac{\partial f_n(\by)}{\partial \by_{m}} \bigg\|
    %\bigg\}
    = 0 \,,
    \label{eq:id_condition}
\end{align}
under the assumption that $f_n$ is continuously differentiable, where the expectation is taken w.r.t.~$\by$.

Estimating the graph topology can then be performed by learning function $f_n$ that fits the data as per model~\eqref{eq:model_central}, while additionally considering penalties to promote a sparsely connected graph, as in many real-world applications~\cite{danisch2018listing}. Afterwards, the topology can be recovered via relation~\eqref{eq:id_condition}. 
By constraining $f_n$ to belong to a Reproducing Kernel Hilbert Space~(RKHS) $\cH$ associated with a positive definite reproducing kernel $\kappa(\cdot,\cdot)$, this problem can be formulated in batch form as~\cite{moscu2020online}:
\begin{align}
    & \min_{f_n \in\mathcal{H}}\,\,\,
    \dfrac{1}{2i} \sum_{\ell=1}^i 
    \Big\|y_n(\ell)- f_n(\by(\ell)) \Big\|^2 
    \nonumber \\
    & + \eta \left( \sum_{m=1}^N \sqrt{\dfrac{1}{i} \sum_{p=1}^{i} \sum_{q=1}^{L_m} \bigg(\dfrac{\partial f_{n}(\by(p))}{\partial y_{{m,q}} } \bigg)^2 } + \psi\big(\|f_n\|_{\cH}\big) \right)\,, \label{eq:sparse_model}
\end{align}
where $y_{{m,q}}$ represents the $q$-th entry of $\by_{m}$.
The first term in \eqref{eq:sparse_model} is a data fitting term that averages the reconstruction error over all available samples. The second one is a convex group-sparsity inducing penalty~\cite{bach2011optimizationSparsity,rosasco2013nonparametric}, which promotes sparsity in the estimated derivatives averaged over all observations $\by(p)$ up to time instant $i$, and consequently on adjacency matrix $\bA$. The last term, with $\psi:\mathbb{R}\to[0,\infty)$, is a monotonically increasing function of $\|f_n\|_{\cH}$ with small magnitude in order to guarantee the uniqueness of the solution of~\eqref{eq:sparse_model}. Parameter $\eta$ makes a trade off between data the fitting term and the regularizers.

Problem~\eqref{eq:sparse_model} allows us to introduce sparsity in $f_n$ without the inherent limitations of an additive model. However, the sparsity-promoting penalty term prevents the application of previously established Representation Theorems that guarantee the existence of a finite-dimensional representation of the solution.
Assuming that kernel $\kappa(\cdot,\cdot)$ is twice differentiable, the following holds~\cite{zhou2008derivativeKernels}:
\begin{align}
    \cH \ni \dfrac{\partial f_{n}(\by)}{\partial y_{m,q} } 
    = \langle f_{n}, \kappa_{\partial_{m,q}}(\cdot,\by) \rangle_{\mathcal{H}}\,,
    \label{eq:der_rkhs}
\end{align}
where:
\begin{align}
    \kappa_{\partial_{m,q}}(\cdot,\by(q)) = \frac{\partial\kappa(\cdot,\ba)}{\partial a_{m,q}} \bigg|_{\ba=\by(q)} \,.
\end{align}
Thus, for sufficiently smooth kernels, the derivative of functions in $\cH$ belongs to the same $\cH$, in turn meaning that they can be evaluated in the form of simple inner products. 
The solution of~\eqref{eq:sparse_model} can now be written in a finite dimensional form, similarly to the proposition in~\cite{rosasco2013nonparametric}. By generalizing the Representer Theorem in the aforementioned paper, we obtain a Representer Theorem for our approach. This result is stated in the following theorem:

\begin{theorem}
Let $\cH$ be a RKHS whose associated reproducing kernel $\kappa(\cdot,\cdot)$ is at least twice differentiable. Then, the solution of the optimization problem~\eqref{eq:sparse_model} can be written as:
\begin{align}
    f_n^{*} {}={} & \sum_{p=1}^i \alpha_{p} \kappa(\cdot,\by(p)) \nonumber \\
    & + \sum_{m=1}^N \sum_{q=1}^{i} \sum_{\ell=1}^{L_m} \beta_{m,q,\ell} \kappa_{\partial_{m,\ell}}\left(\cdot,\by(q)\right)\,.
    \label{eq:rep_theorem}
\end{align}
\label{theorem1}
\end{theorem}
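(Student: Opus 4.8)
The plan is to follow the classical orthogonality argument behind representer theorems, adapted to the derivative-evaluation functionals that enter the group-sparsity penalty. First I would introduce the finite-dimensional subspace $\mathcal{H}_0\subseteq\mathcal{H}$ spanned by the generators $\kappa(\cdot,\by(p))$ for $p=1,\ldots,i$ together with $\kappa_{\partial_{m,\ell}}(\cdot,\by(q))$ for $m=1,\ldots,N$, $q=1,\ldots,i$, $\ell=1,\ldots,L_m$; by construction $\mathcal{H}_0$ is precisely the set of functions admitting a representation of the form \eqref{eq:rep_theorem}. Since $\kappa$ is at least twice differentiable, each $\kappa_{\partial_{m,\ell}}(\cdot,\by(q))$ indeed lies in $\mathcal{H}$ by \eqref{eq:der_rkhs}. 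Being finite-dimensional, $\mathcal{H}_0$ is closed, so every $f_n\in\mathcal{H}$ admits the unique orthogonal decomposition $f_n=f_\parallel+f_\perp$ with $f_\parallel\in\mathcal{H}_0$ and $f_\perp\in\mathcal{H}_0^\perp$.

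Next I would show that the data-fitting term and the first summand of the penalty in \eqref{eq:sparse_model} depend on $f_n$ only through $f_\parallel$. For the data-fitting term, the reproducing property gives $f_n(\by(\ell))=\langle f_n,\kappa(\cdot,\by(\ell))\rangle_{\mathcal{H}}=\langle f_\parallel,\kappa(\cdot,\by(\ell))\rangle_{\mathcal{H}}=f_\parallel(\by(\ell))$, because $\kappa(\cdot,\by(\ell))\in\mathcal{H}_0$ is orthogonal to $f_\perp$. For the group-sparsity term, the derivative-reproducing identity \eqref{eq:der_rkhs} yields, for all $p$, $m$, $q$,
\begin{align*}
\frac{\partial f_n(\by(p))}{\partial y_{m,q}}
&= \big\langle f_n,\kappa_{\partial_{m,q}}(\cdot,\by(p))\big\rangle_{\mathcal{H}}
= \big\langle f_\parallel,\kappa_{\partial_{m,q}}(\cdot,\by(p))\big\rangle_{\mathcal{H}} \\
&= \frac{\partial f_\parallel(\by(p))}{\partial y_{m,q}}\,,
\end{align*}
since $\kappa_{\partial_{m,q}}(\cdot,\by(p))\in\mathcal{H}_0$. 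Hence the entire first line of \eqref{eq:sparse_model} and its group-sparsity summand are unchanged when $f_n$ is replaced by $f_\parallel$.

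It then remains to treat the regularizer $\psi(\|f_n\|_{\mathcal{H}})$. By the Pythagorean identity, $\|f_n\|_{\mathcal{H}}^2=\|f_\parallel\|_{\mathcal{H}}^2+\|f_\perp\|_{\mathcal{H}}^2\geq\|f_\parallel\|_{\mathcal{H}}^2$, and since $\psi$ is monotonically increasing, $\psi(\|f_n\|_{\mathcal{H}})\geq\psi(\|f_\parallel\|_{\mathcal{H}})$, the inequality being strict unless $f_\perp=0$ when $\psi$ is strictly increasing. Putting the three contributions together, the objective of \eqref{eq:sparse_model} at any feasible $f_n$ is no smaller than at $f_\parallel\in\mathcal{H}_0$, so every minimizer belongs to $\mathcal{H}_0$ and therefore admits the expansion \eqref{eq:rep_theorem}; the coefficients $\{\alpha_p\}$ and $\{\beta_{m,q,\ell}\}$ are the coordinates of $f_n^*$ in the generating set. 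Combined with strict convexity of the objective, strict monotonicity of $\psi$ also delivers the uniqueness announced below \eqref{eq:sparse_model}.

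The main obstacle is not the orthogonal-decomposition bookkeeping but making sure the derivative-evaluation functionals $f_n\mapsto\partial f_n(\by(p))/\partial y_{m,q}$ are bounded on $\mathcal{H}$ and are represented by elements living inside the finite-dimensional $\mathcal{H}_0$; this is exactly what the twice-differentiability of $\kappa$ secures through \eqref{eq:der_rkhs}, namely the derivative-reproducing property of \cite{zhou2008derivativeKernels}. That single ingredient is what distinguishes this statement from the classical representer theorem and from its counterpart in \cite{rosasco2013nonparametric}; once it is in place, the remaining steps are routine.
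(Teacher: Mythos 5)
Your proof is correct and follows essentially the same route as the paper's: orthogonal decomposition $f_n=f_\parallel+f_\perp$ onto the span of the kernel sections and their derivative sections, invariance of the data-fitting and derivative-penalty terms under dropping $f_\perp$ via the reproducing and derivative-reproducing properties, and monotonicity of $\psi$ to force $f_\perp=0$. If anything, your use of the Pythagorean identity $\|f_n\|_{\cH}^2=\|f_\parallel\|_{\cH}^2+\|f_\perp\|_{\cH}^2$ is more careful than the paper's step, which writes $\|f_n^{\|}+f_n^{\perp}\|_{\cH}=\|f_n^{\|}\|_{\cH}+\|f_n^{\perp}\|_{\cH}$ — an identity that does not hold for orthogonal summands — though the conclusion is unaffected.
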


\begin{proof}
The proof of this theorem is presented in Appendix~\ref{proof_theo1}.
\end{proof}

Expression~\eqref{eq:rep_theorem} can then be substituted in problem \eqref{eq:sparse_model} in order to obtain a finite-dimensional optimization problem.

In the sequel, for the sake of simplicity in the notation and ease of comprehension, we shall consider the case $L_m = 1$ for all $m$, \ie, $\by_{m}(i)=y_m(i)$ and $\kappa_{\partial_{m,q}}(\cdot,\by)=\kappa_{\partial_{m}}(\cdot,\by)$.

% ======================================
\section{An online algorithm}
\label{sect:online}

The number of coefficients $\alpha_p$ and $\beta_{m,q}$ in solution \eqref{eq:rep_theorem} can become prohibitive as $i$ increases with each new measurement and associated kernel function. Kernel dictionaries are a solution to this issue. These dictionaries can be defined either \textit{a priori} \cite{chen2014convergence} or can admit a new candidate kernel function only if it satisfies a certain sparsification rule \cite{gao2013kernel}, such as the approximate linear dependence criterion \cite{engel2004the,bueno2020gram}, the Nyström method \cite{williams2001using}, dictionary-based random Fourier features \cite{bouboulis2016efficient}, orthogonal projection \cite{takizawa2015adaptive} or the novelty criterion~\cite{platt1991resource}.

We consider a dictionary-based framework, where each node $n$ in the network is equipped with a dictionary of kernel functions (and their derivatives) of the following form:
%\begin{align}
%\cD_n = \big\{ & \{\kappa (\cdot,\by(\omega_j)),\kappa_{\partial_1}(\cdot,\by(\omega_j)), \nonumber \\ 
%&\qquad\quad \ldots,\kappa_{\partial_N}(\cdot,\by(\omega_j))\}  : \omega_j \in \cI^{i}_n \big\} %\subset \{1,\ldots,i-1\} \}
%\,,
%\label{eq:dict_def}
%\end{align}
\begin{align}
\cD_n = \big\{ \kappa (\cdot,\by(\omega_j)) : \omega_j \in \cI^{i}_n \big\} %\subset \{1,\ldots,i-1\} \}
\,,
\label{eq:dict_def}
\end{align}
where $\cI^{i}_n\subset \{1,\ldots,i-1\}$ represents the set of time indices of elements selected for the dictionary, before instant $i$. To build $\cD_n$ iteratively, we consider the coherence criterion as the chosen method for dictionary sparsification~\cite{richard2009online}.
This entails the fact that, after a sufficient number of samples $i$, only a small number $\vert \cI^{i}_n \vert \ll i$ of coefficients will be needed. A candidate kernel function $\kappa(\cdot,\by(i))$ is added to $\cD_n$ if the following sparsification condition holds \cite{richard2009online}:
\begin{equation}
    \max_{\omega_j \in \cI^{i}_n} \vert \kappa(\by(i),\by(\omega_j))\vert \leq \xi_n \,,
    \label{eq:condition_dictio}
\end{equation}
where $\xi_n \in [0,1)$ determines the level of sparsity and the coherence of the dictionary. The number of entries in the dictionary satisfies $\vert \cI^{i}_n \vert < \infty$ when $i \to \infty$ \cite{richard2009online}.
Note that the dictionary $\cD_n$ can also be set \textit{a priori} using, e.g., an uniform grid sampling~\cite{chen2014convergence}. In this case, with a slight abuse of notation, we represent the dictionary using a set of time indices $\cI^{i}_n\subset \mathbb{Z}\setminus\mathbb{N}$, attributing to the dictionary elements negative time indexes.
We rewrite \eqref{eq:rep_theorem} as:
\begin{equation}
    \begin{split}
    f_{n,\cD_n} {}={} & \sum_{p=1}^{\vert \cD_n \vert}\alpha_{p} \kappa(\cdot,\by(\omega_p)) \\
    &+ \sum_{m=1}^{N} \sum_{q=1}^{\vert \cD_n \vert} \beta_{m,q} \kappa_{\partial_m}(\cdot,\by(\omega_q)) \,.
    \label{eq:rep_theorem_dictio}
    \end{split}
\end{equation}
For notation compactness, we shall write $\balpha = \col \{\{\alpha_p\}_p\}$, and $\bbeta = \col \{\{\bbeta_m\}_m\}$ with $\bbeta_m = \col \{\{\beta_{m,q}\}_q\}$.

The dictionary-based representation $f_{n,\cD_n}$ in~\eqref{eq:rep_theorem_dictio} can now be plugged directly into the batch problem~\eqref{eq:sparse_model} to obtain a solution depending on a small amount of coefficients, for all~$i$. However, this would still entail a growth in complexity with time $i$, since the kernel functions have to be evaluated at all available data points. To address this issue and devise a computationally scalable algorithm, we consider an alternative cost function to~\eqref{eq:sparse_model}. Specifically, we substitute the averaging operations in the data fitting term and in the derivative penalty term in~\eqref{eq:sparse_model}, which use all data points up to time instant $i$, by the corresponding expected values. Setting $\psi\equiv0$ without loss of generality, this leads to:
%
% \cred{[add an introductory text for the expec]However, this would still entail a growth in complexity with $i$, since the kernel functions have to be evaluated at all datapoints. To address this issue and devise a computationally scalable algorithm, we consider an alternative cost function to~\eqref{eq:sparse_model}, [we substitute the batch averaging over all time samples ell<=i by an instantaneous measure of the expected error and of the penalization of the expected derivatives] by evaluating the reconstruction error only at instant $i$ and setting $\psi\equiv0$, which results in: [previous phrases moved/modified when talking about grad step?]}
%
% Considering the online version of the batch cost function~\eqref{eq:sparse_model} with the instantaneous MSE estimate (measured only at instant~$i$), setting $\psi\equiv0$, and using the dictionary-based representation of $f_{n,\cD_n}$ in~\eqref{eq:rep_theorem_dictio}, we obtain the following finite-dimensional optimization problem: 
%~\eqref{eq:opt_problem_online2}, with the different quantities defined in~\mref{eq:k,eq:z,eq:l}.
\begin{align}
    \min_{f_{n,\cD_n}} \,\,
    & \dfrac{1}{2}\,\expec\left\{
    \Big\|y_n(i) - f_{n,\cD_n}(\by(i)) \Big\|^2 \right\} \nonumber \\
    & + \eta \sum_{m=1}^N \sqrt{
    %\dfrac{1}{i} \sum_{\ell=1}^i 
    \expec\left\{\bigg(\frac{\partial f_{n,\cD_n}(\by(i))}{\partial y_{m}(i)}\bigg)^2 \right\} }\,.
    \label{eq:opt_problem_online2_func}
\end{align}
where the expectations are w.r.t. $\by(i)$ and conditioned on the dictionary~\eqref{eq:dict_def}.
%(as it either contain samples from up to time $t-1$, or is set a priori~\cite{chen2014convergence}).}

Combining~\eqref{eq:opt_problem_online2_func} with~\eqref{eq:rep_theorem_dictio} as shown in Appendix \ref{annex:fnd}, we get:
\begin{equation}
    \begin{split}
    \min_{\bgamma} \,\,
    J(\bgamma)= & \dfrac{1}{2}\,\expec\Big\{ 
    \Big\|y_n(i) - \bgamma^\top \bs(i) \Big\|^2 \Big\} \\
    &+ \eta \sum_{m=1}^N \sqrt{\bgamma^\top \bR_{tt,m}(i)\,\bgamma}\,,
    \label{eq:opt_problem_online2}
    \end{split}
\end{equation}
with $\bs(i) = \begin{bmatrix} \bz(i) \\ \bk(i) \end{bmatrix}$, $\bgamma=\begin{bmatrix}\bbeta \\ \balpha \end{bmatrix}$, and $\bt_m(p) = \begin{bmatrix} \bl_m(p) \\ \bzeta_m(p) \end{bmatrix}$,\\ where:
\begin{align}
    &\bR_{tt,m}(i) \triangleq \expec\{\bt_m(i) \bt^\top_m(i)\} \label{eq:Rttm}\\
    &\bk(i) = \col \Big\{ \kappa (\by(i),\by(\omega_q)) \Big\}_{q=1}^{\vert \cD_n \vert} \label{eq:k} \\
    &\bz(i) = \big[\bz_1^\top\!(i),\ldots, \bz_{N}^\top\!(i)\big]^\top \\
    &[\bz_m(i)]_{q} = \dfrac{\partial \kappa(\by(i),\by(\omega_q))}{\partial y_{m}(\omega_q)}\Bigg\vert_{q=1,\ldots,\vert \cD_n \vert}\label{eq:z} \\
    &\bzeta(i) = \big[\bzeta_1^\top\!(i),\ldots,\bzeta_N^\top\!(i)\big]^\top \\
    &[\bzeta_m(i)]_{q} = \dfrac{\partial \kappa(\by(i),\by(\omega_q))}{\partial y_{m}(i)}\Bigg\vert_{q=1,\ldots,\vert \cD_n \vert}\label{eq:zeta} \\
    &\bl_m(i) = \big[\bl_{1,m}^\top\!(i), \ldots, \bl_{N,m}^\top\!(i)\big]^\top \\
    &[\bl_{m_1,m_2}(i)]_q = \dfrac{\partial^2 \kappa (\by(i),\by(\omega_q))}{\partial y_{m_1}(\omega_q) \partial y_{m_2}(i)}\Bigg\vert_{q=1,\ldots,\vert \cD_n \vert} 
    \label{eq:l}
\end{align}
Note that $\bgamma$ is now the optimization variable in problem \eqref{eq:opt_problem_online2}. Quantities \eqref{eq:k}--\eqref{eq:l} can be computed in closed form when an explicit expression for kernel~$\kappa(\cdot,\cdot)$ is provided. Expressions in the case of the Gaussian kernel are given in Appendix~\ref{annex:gaussian}. A closed form expression of $\bR_{tt,m}$ for Gaussian random variables is given in the supplementary material.

Calculating the gradient of \eqref{eq:opt_problem_online2} leads to:
\begin{align}
    \nabla J(\bgamma) = - \expec\Big\{\bs(i) &\big[y_n(i) - \bs^\top\!(i) {\bgamma}\big]\Big\} \nonumber \\
    &+ \eta \sum_{m=1}^N \dfrac{\bR_{tt,m}(i) {\bgamma}}
    {\sqrt{\bgamma^\top\,\bR_{tt,m}(i)\,\bgamma}}\,.
    \label{eq:gradient}
\end{align}
Approximating the first expectation in~\eqref{eq:gradient} by the instantaneous estimate $\bs(i)\big[y_n(i) - \bs^\top\!(i) {\bgamma}(i)\big]$, and $\bR_{tt,m}(i)$ by a sample covariance matrix $\hat{\bR}_{tt,m}(i)$ available at time $i$, allows us to devise the following subgradient descent algorithm to iteratively minimize $J(\bgamma)$:
\begin{align}
    \hat{\bgamma}(i+1) {}={} \,  \hat{\bgamma}(i)& + \mu \bs(i) \big[y_n(i) - \bs^\top\!(i) \hat{\bgamma}(i)\big] \nonumber \\
    & - \mu \eta \sum_{m=1}^N \dfrac{\hat\bR_{tt,m}(i) {\hat\bgamma(i)}}
    {\sqrt{\hat\bgamma^\top\!(i)\,\hat\bR_{tt,m}(i)\,\hat\bgamma(i)}}\,,
    \label{eq:update}
\end{align}
with $\mu$ a small positive step size. We set $x/0\triangleq0$ to write the subgradient when the denominator in the last term of \eqref{eq:update} equals $0$. For the sake of brevity, we introduce the notation:
\begin{equation}
    \label{eq:Delta}
    \Delta_m(i) 
    = \sqrt{\hat\bgamma^\top\!(i)\,\hat\bR_{tt,m}(i)\,\hat\bgamma(i)}\,.
\end{equation}
The covariance matrix $\hat\bR_{tt,m}(i)$ needs to be updated sequentially in environments where the data arrives sequentially. An extensive body of literature concerning sequential estimators exists. The following sample covariance matrix provides such an estimate:
\begin{equation}
    \label{eq:cumulative.average.0}
    \hat\bR_{tt,m}(i) = \alpha\,\hat\bR_{tt,m}(i-1) + (1-\alpha)\,\bt_m(i) \bt^\top_m(i)\,,
\end{equation}
with $\alpha\in[0,1)$ a forgetting factor. In the absence of hypothesis about the covariance matrix, a successful approach so far has been shrinkage estimation~\cite{Ledoit2012}. We shall not discuss these details here, which are out of the scope of this paper, but refer the reader to the literature. We shall however assume that the estimator is unbiased, that is:
\begin{equation}
    \label{eq:cumulative.average}
    \expec\{\hat\bR_{tt,m}(i)\} = \bR_{tt,m}(i)\,.
\end{equation}
The time index $i$ will be omitted for stationary variables.

In~\eqref{eq:update}--\eqref{eq:Delta}, observe that each $\Delta_m(i)$ represents an estimate of the squared norm of the partial derivative of $f_n$ with respect to $\by_m$. It thus allows us to infer the graph topology by comparing every quantity $\Delta_m(i)$ to a given threshold $\tau_n$ such that, following~\eqref{eq:id_condition}, $\hat{a}_{nm}(i)=1$ if $\Delta_m(i)\geq\tau_n$ and $0$ otherwise. These thresholds can be chosen to either obtain an estimated topology which realistically explains the studied process \cite{Shen2016NonlinearSV}, or to obtain a connected graph, \ie, a graph in which there exists a path between all pairs of nodes.

% One issue with \cred{update \eqref{eq:update}} is that the number of terms in the summation of $\bt_m(p)$, $p=1,\ldots,i$ grows with $i$, and is thus not scalable.
% We rewrite the summation as:
% \begin{align}
%     \dfrac{1}{i} \sum_{p=1}^i \big[\bgamma^\top \bt_m(p) \big]^2
%     & = \bgamma^\top \bigg( \dfrac{1}{i} \sum_{p=1}^i \bT_{\!m}(p) \bigg) \bgamma
%     \nonumber \\
%     & = \bgamma^\top \hat\bR_{tt,m}(i) \bgamma\,,
% \end{align}
% with $\bT_{\!m}(p) = \bt_m(p) \bt_m^\top\!(p)$ and $\hat\bR_{tt,m}(i) = \frac{1}{i} \sum_{p=1}^i \bT_{\!m}(p)$, for $m=1,\ldots,N$.

\section{Theoretical analysis}
\label{sect:analysis}

Let $\bv(i) = \hat{\bgamma}(i) - \bgamma^*$ be the weight-error vector, where $\hat{\bgamma}(i)$ is the current estimate in \eqref{eq:update} and $\bgamma^*$ is the
%optimal solution to problem \eqref{eq:opt_problem_online2} given the dictionary \eqref{eq:dict_def}, which we assume to be set \textit{a priori}~\cite{chen2014convergence}.
%
optimal coefficient vector, obtained as the solution to problem~\eqref{eq:opt_problem_online2}.
%
% the following optimization problem, analogous to problem~\eqref{eq:final_cost} written in expectation form:
% \begin{align}
%     & \bgamma^* = \argmin_{\bgamma}\,
%     \dfrac{1}{2} 
%     \expec\left\{\Big\|y_n(i) - \bgamma^\top \bs(i) \Big\|^2\right\} 
%     \nonumber \\ & \qquad\qquad
%     + \eta \sum_{m=1}^N \sqrt{\bgamma^\top \expec\left\{\hat\bR_{tt,m}(i)\right\}\bgamma} \,,
%     \label{eq:opt_prob_gammaStar}
% \end{align}
% where the expectations are computed given the dictionary \eqref{eq:dict_def}, which we assume to be deterministic because set a priori~\cite{chen2014convergence}.

\subsection{Relating the error in derivatives to the error in coefficients}

%One difficulty related to the iterative algorithm in~\eqref{eq:update} is that it considers the evolution of the coefficients $\bgamma$, instead of the partial derivatives of the function ${\partial \hat{\!f}_n(\by)}/{\partial y_m}$. Nevertheless, we can study the convergence of the derivatives by means of filter coefficients as will be shown in the following.
One difficulty related to the iterative algorithm~\eqref{eq:update} is that it considers the evolution of the coefficients $\bgamma$, instead of the partial derivatives ${\partial \hat{\!f}_{n,\mathcal{D}_n}(\by)}/{\partial y_m}$. Nevertheless, we can study the convergence of the derivatives by means of filter coefficients as will be shown in the following.

Using \eqref{eq:der_rkhs} and the Cauchy-Schwarz inequality:
\begin{align}
    &\left|\dfrac{\partial \hat{\!f}_{n,\mathcal{D}_n}(\by)}{\partial y_m} - \dfrac{\partial f_{n,\mathcal{D}_n}^*(\by)}{\partial y_m}\right| = \\ &= \left|
    \langle \hat{\!f}_{n,\mathcal{D}_n}, \kappa_{\partial_{m}}(\cdot,\by) \rangle_{\mathcal{H}} - \langle f^*_{n,\mathcal{D}_n}, \kappa_{\partial_{m}}(\cdot,\by) \rangle_{\mathcal{H}} \right|
    \nonumber \\
    &= \left| \langle \hat{\!f}_{n,\mathcal{D}_n}-f^*_{n,\mathcal{D}_n}, \kappa_{\partial_{m}}(\cdot,\by) \rangle_{\mathcal{H}} \right|
    \nonumber \\
    & \leq \Vert \kappa_{\partial_{m}}(\cdot,\by) \Vert_{\mathcal{H}} \Vert \hat{\!f}_{n,\mathcal{D}_n} - f^*_{n,\mathcal{D}_n} \Vert_{\mathcal{H}} \,.
\end{align}
Moreover, we also have:
\begin{align}
    \Vert \hat{\!f}_{n,\mathcal{D}_n} - f^*_{n,\mathcal{D}_n} \Vert_{\mathcal{H}} &= \Vert \bs^\top \hat{\bgamma} - \bs^\top \bgamma^* \Vert
   =  \Vert \bs^\top (\hat{\bgamma} - \bgamma^*) \Vert
    \nonumber \\
    & \leq \Vert \bs \Vert \,\Vert \hat{\bgamma} - \bgamma^* \Vert\,.
\end{align}
By taking the expectation of both sides conditioned on $\hat{\bgamma}$
%and assuming that the estimated coefficients $\hat{\bgamma}$ are independent of $\by$
, we have:
\begin{align}
    \expec\Bigg\{\bigg| & \dfrac{\partial \hat{\!f}_{n,\mathcal{D}_n}(\by)}{\partial y_m} -  \dfrac{\partial f_{n,\mathcal{D}_n}^*(\by)}{\partial y_m}\bigg|
    \,\Bigg|\, \hat{\bgamma} 
    \Bigg\} \nonumber \\ & \leq \expec\bigg\{\Vert \kappa_{\partial_{m}}(\cdot,\by) \Vert_{\mathcal{H}} \Vert \bs \Vert  \,\bigg|\, \hat{\bgamma} \bigg\} \, 
    %\expec\big\{
    \Vert \hat{\bgamma} - \bgamma^* \Vert
    %\big\}
    \,.
\end{align}
where the first expectation on the r.h.s. is not related to $\hat{\!f}_{n,\mathcal{D}_n}$ and $f^*_{n,\mathcal{D}_n}$ but depends on the statistics of the input data and on the reproducing kernel and its derivatives. Thus the error in the estimated derivatives is bounded by the error in the coefficients~$\hat\bgamma$. This allows us to study the convergence behavior of the derivatives indirectly by means of the coefficients $\hat{\bgamma}$.

\subsection{Simplifying assumptions}

To make the analysis of algorithm~\eqref{eq:update} tractable, we need to introduce some simplifying assumptions.

\begin{enumerate}
    \item[\textsf{A1)}] We assume that vector $\by(i)$ is stationary  and Gaussian-distributed, with zero-mean and covariance matrix $\bR_{y}$, namely:
    \begin{equation*}
        \hspace{-0.7cm} 
        p\big(\by(i)\big) = (2\pi)^{-N/2}\det\{\bR_{y}\}^{-1/2}\exp\Big(-\frac{1}{2} \by^\top\!(i) \bR_{y}^{-1} \by(i)\Big)\,.
    \end{equation*}
    % \begin{align}
    %     p\big(\by(i)\big) = & (2\pi)^{-N/2}\det\{\bR_{y}\}^{-1/2} \nonumber \\
    %     & \times \exp\Big(-\frac{1}{2} \by^\top\!(i) \bR_{y}^{-1} \by(i)\Big)\,.
    % \end{align}
    
    \item[\textsf{A2)}] We assume that $L_m=1$ for simplicity, but the same analysis can be extended for $L_m>1$.
    
    \item[\textsf{A3)}] We consider the Gaussian kernel: 
    \begin{equation}
    \kappa(\ba,\bb) = \exp \left( - {\Vert \ba - \bb \Vert^2}/{2\sigma^2} \right)\,,
    \label{eq:gaussian_ker}
    \end{equation}
    due to its capacities as an universal approximator \cite{liu2010kernel}.
    
    \item[\textsf{A4)}] We make the Modified Independence Assumption, that is, $\bs(i)\bs^\top\!(i)$ is statistically independent of $\bv(i)$. This assumption has been successfully used in the analysis of various adaptive filtering algorithms~\cite{parreira2012stochasticGaussianKLMS}. It has also been shown in~\cite{minkoff2001comment} to be less restrictive when compared to the classical independence assumption.
    
    \item[\textsf{A5)}] The elements $\{\by(\omega_p)\}$ in the dictionary are set \textit{a priori}. They are thus independent of $\by(i)$.
    
    \item[\textsf{A6)}] The entries of $\bv(i)$ are jointly Gaussian distributed. This hypothesis has been used and validated in the analysis of both linear~\cite{chen2009sparse,chen2016transientZeroAttLMS} and nonlinear (kernel-based)~\cite{gao2019convergenceSparseKLMS} adaptive algorithms.
\end{enumerate}
We  also use the following approximation
\begin{equation}
    \hat\bR_{tt,m}(i)\approx \expec\big\{ \hat\bR_{tt,m}(i) \big\}\stackrel{\eqref{eq:cumulative.average}}{=} \bR_{tt,m}
\end{equation}
to make the analysis tractable. As an example, note that the decay of the estimation error of $\bR_{tt,m}$ using \eqref{eq:cumulative.average.0} with i.i.d. samples is in the order of $(1/i)$.

% ============================================
\subsection{Mean weight error analysis}
The estimation error $e(i)$ is given by:
\begin{align}
    e(i) & \triangleq y_n(i) - f_n(\by(i)) 
    = y_n(i) - \bs^\top\!(i) \hat{\bgamma}(i)
    \nonumber \\
    % & = y_n(i) - \bs^\top\!(i) \bv(i) - \bs^\top\!(i) \bgamma^* \\ 
    & = \underbrace{y_n(i) - \bs^\top\!(i) \bgamma^*}_{e_{0}(i)} - \,\bs^\top\!(i) \bv(i)\,.
    % \nonumber 
    \label{eq:error} 
\end{align}
Considering \eqref{eq:update} and \eqref{eq:error}, we can write $\bv(i+1)$ as:
\begin{align}
    \bv (i+1)& 
    % = \bv(i)  - \mu \bs(i)\bs^\top\!(i) \bv(i)  + \mu \bs(i) \big( y_n(i)  - \bs^\top\!(i) \bgamma^* \big) \nonumber \\
    % & - \mu \eta \sum_{m=1}^N \dfrac{\hat\bR_{tt,m}(i) \left(\bv(i) + \bgamma^*\right)}{\sqrt{\left(\bv(i) + \bgamma^*\right)^\top \hat\bR_{tt,m}(i) \left(\bv(i) + \bgamma^*\right)}}
    % \nonumber \\
     =  \bv(i)  - \mu \bs(i)\bs^\top\!(i) \bv(i)  + \mu \bs(i) e_0(i) \label{eq:weight_error} \\
    & - \mu \eta \sum_{m=1}^N \dfrac{\hat\bR_{tt,m}(i)\left[\bv(i) + \bgamma^*\right]}{\sqrt{\left[\bv(i) + \bgamma^*\right]^\top \hat\bR_{tt,m}(i) \left[\bv(i) + \bgamma^*\right]}}\,,
    \nonumber
\end{align}
where $e_{0}(i)=y_n(i)-\bs^\top\!(i)\bgamma^*$ is the optimal estimation error. 
Taking the expected value of both sides of \eqref{eq:weight_error}, and using \textsf{A4}, we obtain the mean weight error model:
\begin{align}
    \expec & \{\bv(i+1)\} = (\bI - \mu \bR_{ss}) \expec \{\bv(i)\} 
    + \mu\big(\br_{sy} - \bR_{ss}\bgamma^*\big) \nonumber \\
    & - \mu \eta \sum_{m=1}^N \expec\Bigg\{\dfrac{\hat\bR_{tt,m}(i) \left[\bv(i) + \bgamma^*\right]}{\sqrt{\left[\bv(i) + \bgamma^*\right]^\top \hat\bR_{tt,m}(i) \left[\bv(i) + \bgamma^*\right]}}\Bigg\}\,,
    \label{eq:expec_weight_error}
\end{align}
with $\bR_{ss} \triangleq \expec \{ \bs(i) \bs^\top\!(i) \}$ and $\br_{sy} \triangleq \expec\{\bs(i)y_n(i)\}$.

The last term in \eqref{eq:expec_weight_error} is a non-trivial expectation involving $\bv(i)$ and $\hat\bR_{tt,m}(i)$. To proceed, we need to use the simplifying assumption $\hat\bR_{tt,m}(i)\approx \bR_{tt,m}$ discussed above. This yields:
% \begin{align}
%     \expec & \left\{\dfrac{\hat\bR_{tt,m}(i) \left(\bv(i) + \bgamma^*\right)}{\sqrt{\left(\bv(i) + \bgamma^*\right)^\top \hat\bR_{tt,m}(i) \left(\bv(i) + \bgamma^*\right)}}\right\}
%     \nonumber \\ & \approx
%     \expec\left\{\dfrac{\expec\big\{\hat\bR_{tt,m}(i)\} \hat{\bgamma}(i)}{\sqrt{\hat{\bgamma}^\top\!(i) \expec\big\{\hat\bR_{tt,m}(i)\}\hat{\bgamma}(i)}}
%     \right\}\,.
% \end{align}
\begin{align}
    \expec\left\{\dfrac{\hat\bR_{tt,m}(i) \hat{\bgamma}(i)}{\sqrt{\hat{\bgamma}^\top\!(i) \hat\bR_{tt,m}(i)\hat{\bgamma}(i)}}\right\}
    \approx
    \expec\left\{\dfrac{\bR_{tt,m} \hat{\bgamma}(i)}{\sqrt{\hat{\bgamma}^\top\!(i) \bR_{tt,m}\hat{\bgamma}(i)}}
    \right\}\,.
\end{align}
By successively approximating the expectation of the ratio by the ratio of the expectations, and the expectation of the square root by the square root of the expectation \cite[p. 70]{elandt1999survival}, we have:
\begin{align}
    \expec&\left\{\dfrac{\bR_{tt,m} \hat{\bgamma}(i)}{\sqrt{\hat{\bgamma}^\top\!(i) \bR_{tt,m}\hat{\bgamma}(i)}}
    \right\} 
    \nonumber \\
    &\approx \dfrac{\bR_{tt,m}\bmu(i)}{\sqrt{\tr\{\bR_{tt,m}\bSig(i)\}+\bmu^\top\!(i)\bR_{tt,m}\bmu(i)}}\,,
    \label{eq:approx_reg}
\end{align}
which represents the sought-after approximation of the regularization term, with $\bmu(i)$ and $\bSig(i)$ given by:
\begin{equation}
    \label{eq:mu}
    \bmu(i) = \expec\big\{\hat{\bgamma}(i)\big\}
    = \expec\{\bv(i)\} + \bgamma^*\,,
\end{equation}
and:
\begin{align}
    \bSig(i) &= \expec\big\{\hat{\bgamma}(i)\hat{\bgamma}^\top\!(i)\big\}-\bmu(i)\,\bmu^\top\!(i)
    \nonumber \\
    % &= \expec\big\{[\bv(i)+\bgamma^*][\bv(i)+\bgamma^*]^\top\big\} - \bmu\bmu^\top
    % \nonumber \\
    &= \expec\{\bv(i)\bv^\top\!(i)\} - \expec\{\bv(i)\}\expec\{\bv(i)\}^\top\,,
    \label{eq:Sigma}
\end{align}
respectively. Using these results in~\eqref{eq:expec_weight_error}, we obtain the following expression for the mean error recursion:
\begin{align}
    \expec& \{\bv(i+1)\} \approx (\bI - \mu \bR_{ss}) \expec \{\bv(i)\}
    + \mu\big(\br_{sy}-\bR_{ss}\bgamma^*\big) \nonumber \\
    & - \mu \eta \sum_{m=1}^N \dfrac{\bR_{tt,m}\,\bmu(i)}{\sqrt{\tr\{\bR_{tt,m}\bSig(i)\}+\bmu^\top\!(i)\,\bR_{tt,m}\,\bmu(i)}}\,.
    \label{eq:expec_weight_error2}
\end{align}
where $\bmu(i)$ and $\bSig(i)$ have to be calculated at each time~$i$ in order to make recursion~\eqref{eq:expec_weight_error2} functional. Expressions~\eqref{eq:mu} and~\eqref{eq:Sigma} are used, respectively. Specifically, a recursive model is provided in subsection~\ref{ss:mean_squared} for calculating $\expec\{\bv(i)\bv^\top\!(i)\}$.

% ============================================
%\subsection{Approximating \texorpdfstring{$\hat\bR_{tt,m}(i)\approx\expec\{\hat\bR_{tt,m}(i)\}$}{Tm_bar approx E{Tm_bar}}}

%\todo[inline]{rereading the analysis, this subsection now seems more appropriate if merged with subsection B (with the simplifying hypotheses, since we use it before this subsection is introduced (in the mean analysis, subsection C) and also after (all over the mean squared analysis), so it is now kind of "sandwiched" between other things and it breaks the flow a bit. After presenting hypothesis A6, we could just say in a new paragraph "We will also use the approximation $\hat\bR_{tt,m}(i)\approx\expec\{\hat\bR_{tt,m}(i)\}$ during the analysis. [...]" and then paste the content of this subsection thereafter;}

% ======================================================
\subsection{Mean stability analysis}

Iterating \eqref{eq:expec_weight_error} starting from $i=0$, we arrive to the following expression:
\begin{align}
    \expec & \{\bv(i+1)\} \nonumber \\ 
    =\, & (\bI - \mu \bR_{ss})^{i+1} \expec \{\bv(0)\} 
    - \mu \eta \sum_{\ell=0}^i (\bI - \mu \bR_{ss})^{i-\ell}
    \nonumber \\
    & \times\sum_{m=1}^N\expec\left\{\dfrac{\hat\bR_{tt,m}(\ell) \left[\bv(\ell) + \bgamma^*\right]}{\sqrt{\left[\bv(\ell) + \bgamma^*\right]^\top \hat\bR_{tt,m}(\ell) \left[\bv(\ell) + \bgamma^*\right]}}\right\}
    \nonumber \\
    & + \mu \sum_{\ell=0}^i (\bI - \mu \bR_{ss})^{i-\ell} (\br_{sy}-\bR_{ss}\bgamma^*)\,,
    \label{eq:weight_error_closedForm}
\end{align}
where $\expec\{\bv(0)\}$ is the initial condition. $\expec\{\bv(i+1)\}$ converges when $i\rightarrow\infty$ if, and only if, all terms on the r.h.s. of~\eqref{eq:weight_error_closedForm} converge to finite values. The first term converges  to zero as $i\rightarrow \infty$ if the matrix $(\bI - \mu \bR_{ss})$ is stable. A sufficient condition to ensure the stability of $(\bI - \mu \bR_{ss})$ is to choose the step-size $\mu$ according to:
\begin{equation}
    0 < \mu < \frac{2}{\lambda_{\max}(\bR_{ss})}\,.
    \label{eq:stab_cond}
\end{equation}
where $\lambda_{\max}(\cdot)$ denotes the maximum eigenvalue of its matrix argument. We shall now prove the convergence of the second and third series on the r.h.s. of~\eqref{eq:weight_error_closedForm}. For compactness, we write the second series as $\sum_{\ell=0}^i (\bI-\mu\bR_{ss})^{i-\ell}\sum_{m=1}^N\boldsymbol{q}_m(\ell)$.
To prove the convergence of this series, it is sufficient to prove that the series $\sum_{\ell=0}^i\sum_{m=1}^N[(\bI-\mu\bR_{ss})^{i-\ell}\boldsymbol{q}_m(\ell)]_k$ converges for all $k$, where $[\cdot]_k$ is the $k$-th entry of its vector argument. A series is absolutely convergent if each term of the series can be bounded by a term of an absolutely convergent series. Using Jensen's inequality, we have:
\begin{align}
    & \left|[(\bI-\mu\bR_{ss})^{i-\ell}
    \boldsymbol{q}_m(\ell)]_k\right| \nonumber\\
    &\leq
    \left\|(\bI-\mu\bR_{ss})^{i-\ell}
    \boldsymbol{q}_m(\ell)\right\| \nonumber\\
    &\approx \left\| (\bI - \mu \bR_{ss})^{i-\ell}\,
     \expec\left\{\dfrac{\bR_{tt,m} \left[\bv(\ell) + \bgamma^*\right]}{\sqrt{\left[\bv(\ell) + \bgamma^*\right]^\top \bR_{tt,m} \left[\bv(\ell) + \bgamma^*\right]}}\right\}\right\|
    \nonumber \\
    &\leq \left\|\bI - \mu \bR_{ss}\right\|^{i-\ell} 
    \nonumber %\\
    \, \expec\left\{\dfrac{\|\bR_{tt,m}\|\,\|\bv(\ell) + \bgamma^*\|}{\|\bv(\ell) + \bgamma^*\|_{{\bR_{tt,m}}}} \right\}
    \nonumber \\
    &\leq \left\|\bI - \mu \bR_{ss}\right\|^{i-\ell} 
    \, \tau_m\,\Vert\bR_{tt,m}\Vert\,,
    \label{eq:stability_mean_1}
\end{align}
where $\|\bv\|_{{\bR_{tt,m}}}$ denotes $\bv^\top\bR_{tt,m}\bv$. The last inequality follows because all norms are equivalent in finite dimensional spaces, that is, there exists a $\tau_m>0$ such that:
\begin{equation}
    \label{eq:norm-equiv}
    \|\bv\|
    \leq
    \tau_m\,\|\bv\|_{{\bR_{tt,m}}}\,,
\end{equation}
for all $\bv$. The series $\sum_{\ell=0}^i \left\|\bI - \mu \bR_{ss}\right\|^{i-\ell} \sum_{m=1}^N\tau_m\,\|\bR_{tt,m}\|$ is absolutely convergent if the step-size $\mu$ is chosen according to \eqref{eq:stab_cond}.
The series $\sum_{\ell=0}^i\sum_{m=1}^N[(\bI-\mu\bR_{ss})^{i-\ell}\boldsymbol{q}_m(\ell)]_k$ is therefore absolutely convergent.

To show that the third series in~\eqref{eq:weight_error_closedForm}
converges, it is sufficient to observe that: 
\begin{equation}
    \label{eq:stability_mean_2}
    \begin{split}
            &\left|[(\bI - \mu \bR_{ss})^{i-\ell} \big(\br_{sy}-\bR_{ss}\bgamma^*\big)]_k\right| \\
            &\hspace{2cm}\leq \left\|\bI - \mu \bR_{ss}\right\|^{i-\ell} 
    \Vert\br_{sy}-\bR_{ss}\bgamma^*\Vert\,.
    \end{split}
\end{equation}
The series $\sum_{\ell=0}^i\left\|\bI - \mu\bR_{ss}\right\|^{i-\ell} \Vert\br_{sy}-\bR_{ss}\bgamma^*\Vert$ is absolutely convergent if the step-size $\mu$ is chosen according to \eqref{eq:stab_cond}.

To conclude, all series in \eqref{eq:weight_error_closedForm} converge if \eqref{eq:stab_cond} is satisfied. Therefore, for any initial condition, the algorithm~\eqref{eq:update} converges in the mean if the step-size $\mu$ is chosen according to condition~\eqref{eq:stab_cond}.

% \newpage
\subsection{Mean square error analysis}
\label{ss:mean_squared}
We examine now the mean-square-error behavior of the algorithm~\eqref{eq:update} by studying $\expec\{ \bv(i) \bv^\top\!(i)\}$.
The mean-square-error recursion is given by:
\begin{align}
    &\bV(i+1) 
    = \expec\{ \bv(i+1) \bv^\top\!(i+1)\} \nonumber \\
    % \stackrel{\textsf{\eqref{eq:weight_error}}}{=} &
    % \expec\Bigg\{\bv(i) \bigg[\bv(i)  - \mu \bs(i)\bs^\top\!(i) \bv(i)  + \mu \bs(i) e_{0}(i) \nonumber \\
    % %
    % & - \mu \eta \sum_{m=1}^N \dfrac{\hat\bR_{tt,m}(i) \hat{\bgamma}(i)}{\hat{\Lambda}_m(i)} \bigg]^\top
    % - \mu \bs(i)\bs^\top\!(i) \bv(i) \bigg[ \bv(i) \nonumber \\
    % %
    % & - \mu \bs(i)\bs^\top\!(i) \bv(i)  + \mu \bs(i) e_{0}(i) - \mu \eta \sum_{m=1}^N \dfrac{\hat\bR_{tt,m}(i) \hat{\bgamma}(i)}{\hat{\Lambda}_m(i)}  \bigg]^\top
    % \nonumber \\
    % %
    % &  + \mu \bs(i) e_{0}(i) \bigg[ \bv(i)  - \mu \bs(i)\bs^\top\!(i) \bv(i)  + \mu \bs(i) e_{0}(i)  
    % \nonumber \\
    % %
    % & - \mu \eta \sum_{m=1}^N \dfrac{\hat\bR_{tt,m}(i) \hat{\bgamma}(i)}{\hat{\Lambda}_m(i)}  \bigg]^\top \bigg[- \mu \eta \sum_{m=1}^N \dfrac{\hat\bR_{tt,m}(i) \hat{\bgamma}(i)}{\hat{\Lambda}_m(i)} \bigg] \nonumber \\
    % %
    % & \bigg[ \bv(i)  - \mu \bs(i)\bs^\top\!(i) \bv(i)  + \mu \bs(i) e_{0}(i) \nonumber \\
    % %
    % & - \mu \eta \sum_{m=1}^N \dfrac{\hat\bR_{tt,m}(i) \hat{\bgamma}(i)}{\hat{\Lambda}_m(i)}  \bigg]^\top 
    % \Bigg\} \\
    %
    % SECOND EQUATION
    % Q1 Q2
    &= \expec\Bigg\{\sym\Bigg\{ 
    \frac{1}{2} \underbrace{\bv(i) \bv^\top\!(i)}_{\bQ_1}
    - \mu \underbrace{\bv(i)\bv^\top\!(i) \bs(i)\bs^\top\!(i)}_{\bQ_2} \nonumber \\
    % Q3 Q4
    & + \mu \underbrace{\bv(i)\bs^\top\!(i) e_{0}(i)}_{\bQ_3}
    - \mu^2 \underbrace{\bs(i)\bs^\top\!(i) \bv(i) \bs^\top\!(i) e_{0}(i)}_{\bQ_4} 
    \nonumber \\
    % Q5 Q6
    & + \frac{\mu^2}{2} \underbrace{\bs(i) \bs^\top\!(i) (e_{0}(i))^2}_{\bQ_5} + \frac{\mu^2}{2} \underbrace{\bs(i)\bs^\top\!(i) \bv(i) \bv^\top\!(i) \bs(i)\bs^\top\!(i)}_{\bQ_6}\nonumber \\
    % Q7
    & - \mu \eta \underbrace{\bv(i)\bigg[\sum_{m=1}^N \dfrac{\hat\bR_{tt,m}(i) \hat{\bgamma}(i)}{\Delta_m(i)}\bigg]^\top}_{\bQ_7} \nonumber \\
    % Q8
    & + \mu^2 \eta \underbrace{\bs(i)\bs^\top\!(i) \bv(i) \bigg[\sum_{m=1}^N \dfrac{\hat\bR_{tt,m}(i) \hat{\bgamma}(i)}{\Delta_m(i)}  \bigg]^\top}_{\bQ_8}
    \nonumber \\ 
    % Q9
    & - \mu^2 \eta \underbrace{\bs(i) e_{0}(i) \bigg[\sum_{m=1}^N \dfrac{\hat\bR_{tt,m}(i) \hat{\bgamma}(i)}{\Delta_m(i)}  \bigg]^\top}_{\bQ_9}
    \nonumber \\
    % Q10
    & + \frac{\mu^2\eta^2}{2} \underbrace{\bigg[\sum_{m=1}^N \dfrac{\hat\bR_{tt,m}(i) \hat{\bgamma}(i)}{\Delta_m(i)} \bigg] \bigg[\sum_{p=1}^N \dfrac{\hat\bR_{tt,p}(i) \hat{\bgamma}(i)}{\Delta_p(i)}  \bigg]^\top}_{\bQ_{10}}\Bigg\}\Bigg\}\,,
    \label{eq:Vi_recursion}
\end{align}
where $\sym(\bX)=\bX+\bX^\top$. We shall now calculate the expectations separately. Helpful details are provided in \cite{parreira2012stochasticGaussianKLMS}:

\begin{itemize}
    \item[$\bQ_1$]   
    $\expec\{\bv(i)\bv^\top\!(i)\}=\bV(i)$ appears on the r.h.s. of~\eqref{eq:Vi_recursion}.
    
    \item[$\bQ_2$] Using some algebraic manipulations and assumption \textsf{A4}:
    \begin{align}
        \expec\{\bv(i) \bv^\top\!(i) \bs(i)\bs^\top\!(i)\}
        & = \expec\{\bv(i) \bv^\top\!(i) \}  \expec\{ \bs(i)\bs^\top\!(i)\}
        \nonumber \\
        & = \bV(i) \bR_{ss}\,,
        \label{eq:term2}
    \end{align}
    where $\bR_{ss}$ is calculated in closed-form in the supplementary material for this paper.
    
    \item[$\bQ_3$] Using assumption \textsf{A4}, we obtain:
    \begin{align}
        \expec\{\bv(i) \bs^\top\!(i)  e_{0}(i) \} 
        & = \expec\{\bv(i) \} \expec\{ \bs^\top\!(i) e_{0}(i) \} 
        \nonumber \\
        % & = \expec\{\bv(i) \}\left( \expec\{ \bs^\top\!(i) y_n(i)\}
        % - \left(\bgamma^*\right)^\top \bR_{ss} \right) \nonumber \\
        & = \expec\{\bv(i) \}\left( \br_{sy}
        - \bR_{ss}\bgamma^* \right)^\top \,,
        \label{eq:term3}
    \end{align}
    where all these quantities have already been calculated.
    
    \item[$\bQ_4$] Considering hypothesis \textsf{A4} and assuming the input signal to be i.i.d., we have: 
    \begin{align}
        \big[&\expec \{\bs(i)\bs^\top\!(i) \bv(i) \bs^\top\!(i) e_{0}(i) \} \big]_{u,v} \nonumber \\
        &= \sum_{a=1}^{N|\cD_n|} \expec\{ s_u(i) s_a(i) v_a(i) s_v(i) e_{0}(i) \}
        \nonumber \\
        &= \sum_{a=1}^{N|\cD_n|} \expec\{v_a(i)\} \expec\{ s_u(i) s_a(i) s_v(i) [y_n(i)-\bs^\top\!(i)\bgamma^*] \}
        \nonumber \\
        &= - \sum_{b=1}^{N|\cD_n|} \sum_{a=1}^{N|\cD_n|} \expec\{v_a(i)\} \expec\{ s_u(i) s_a(i) s_v(i) s_b(i) \} \gamma_b^*
        \nonumber \\
        &\quad + \sum_{a=1}^{N|\cD_n|} \expec\{v_a(i)\} \expec\{ s_u(i) s_a(i) s_v(i) y_n(i) \}\,,
        \label{eq:term4}
    \end{align}
    where both expectations $\expec\{ s_u(i) s_a(i) s_v(i) y_n(i) \}$ and $\expec\{ s_u(i) s_a(i) s_v(i) s_b(i)\}$ are provided in closed-form in the supplementary material for this paper.
    
    \item[$\bQ_5$] 
    This term can be written as:
    \begin{align}
        &\big[\expec \{\bs(i) \bs^\top\!(i) e_{0}^2(i)\} \big]_{u,v} \nonumber \\
        &= \expec\{s_u(i) s_v(i) [y_n(i)-\bs^\top\!(i)\bgamma^*]^2\}
        \nonumber \\
        &= \expec\{s_u(i) s_v(i) y_n^2(i) \}\nonumber \\
        &\quad- 2 \sum_{p} \gamma^*_p \expec\{s_u(i) s_v(i) s_p(i) y_n(i) \}
        \nonumber \\ & 
        \quad+ \sum_{\ell}\sum_m \gamma^*_{\ell} \gamma^*_m \expec\{s_u(i) s_v(i) s_{\ell}(i) s_m(i)\}\,.
        \label{eq:term5}
    \end{align}

    \item[$\bQ_6$] We use the strategy in \cite{parreira2012stochasticGaussianKLMS} and assumption \textsf{A4} in order to obtain the following approximation:
    \begin{align}
    &\big[\expec \{\bs(i)\bs^\top\!(i) \bv(i) \bv^\top\!(i) \bs(i)\bs^\top\!(i) \} \big]_{u,v} \nonumber \\
    & = \sum_{\ell} \sum_{m} \expec\{ s_u(i) s_{\ell}(i) v_{\ell}(i) v_m(i)s_m(i) s_v(i)\} \nonumber
    \\
    & \approx \sum_{\ell} \sum_{m} \expec\{s_u(i) s_{\ell}(i) s_m(i) s_v(i)\} \expec\{v_{\ell}(i) v_m(i)\}\,,
    \label{eq:suxyv}
    \end{align}
    where $\expec\{v_{\ell}(i) v_m(i)\}=\big[\bV(i)\big]_{\ell,m}$. A closed-form expression of the expectation $\expec\{s_u(i) s_{\ell}(i) s_m(i) s_v(i)\}$ is provided in the supplementary material for this paper.
    
    \item[$\bQ_7$] Recalling that $\bv(i)=\hat{\bgamma}(i)-\bgamma^*$, this term becomes:
    % We need to compute:
    % \begin{align}
    % \expec & \left\{
    % \bv(i)
    % \left[\sum_{m=1}^N \dfrac{\hat\bR_{tt,m}(i) \hat{\bgamma}(i)}{\sqrt{\hat{\bgamma}^\top\!(i) \hat\bR_{tt,m}(i)\hat{\bgamma}(i)}}  \right]^\top 
    % \right\} \nonumber \\
    % & = \sum_{m=1}^N \expec\left\{ \dfrac{\bv(i)\hat{\bgamma}^\top\!(i)\hat\bR_{tt,m}(i)^\top}{\sqrt{\hat{\bgamma}^\top\!(i) \hat\bR_{tt,m}(i)\hat{\bgamma}(i)}}
    % \right\}\,.
    % \end{align}
    % We present it alternatively as:
    \begin{align}
    \expec & \Bigg\{
    \bv(i)
    \Bigg[\sum_{m=1}^N \dfrac{\hat\bR_{tt,m}(i) \hat{\bgamma}(i)}{\Delta_m(i)}  \Bigg]^\top 
    \Bigg\}
    \nonumber\\
    % = & \sum_{m=1}^N \expec\left\{ \dfrac{\left[\bv(i)+\bgamma^*-\bgamma^*\right]\hat{\bgamma}^\top\!(i)\hat\bR_{tt,m}(i)^\top}{\sqrt{\hat{\bgamma}^\top\!(i) \hat\bR_{tt,m}(i)\hat{\bgamma}(i)}}
    % \right\} 
    % \nonumber \\
    %
    &= \sum_{m=1}^N \expec\left\{ \dfrac{\hat{\bgamma}(i)\hat{\bgamma}^\top\!(i)\hat\bR_{tt,m}(i)}{{\Delta_m(i)}}
    \right\} \nonumber \\
    & - \sum_{m=1}^N \bgamma^* \expec\left\{ \dfrac{\hat{\bgamma}^\top\!(i)\hat\bR_{tt,m}(i)}{\Delta_m(i)}
    \right\}\,.
    \label{eq:term7_aux}
    \end{align}
    The second term on the r.h.s. of \eqref{eq:term7_aux} can be calculated with \eqref{eq:approx_reg}. To calculate the first term on the r.h.s. of \eqref{eq:term7_aux}, we use the following approximation:
    \end{itemize}
   \begin{align}
    \expec & \left\{ \dfrac{\hat{\bgamma}(i)\hat{\bgamma}^\top\!(i)\hat\bR_{tt,m}(i)}{{\Delta_m(i)}}
    \right\} 
    \approx 
    \expec\left\{ \dfrac{\hat{\bgamma}(i)\hat{\bgamma}^\top\!(i)\bR_{tt,m}}{{\Delta_m(i)}} \right\}
    \nonumber \\
    & \approx
    \dfrac{\left[\bSig(i) +\bmu(i)\,\bmu^\top\!(i)\right]\bR_{tt,m}}{\sqrt{\tr\{\bR_{tt,m}\bSig(i)\}+\bmu^\top\!(i)\bR_{tt,m}\bmu(i)}} \,,
    \label{eq:term7}
    \end{align}
    \begin{itemize}
    %\item[] in which we successively approximated the expectation of the ratio by the ratio of the expectations, and the expectation of the square root by the square root of the expectation. 
    
    \item[$\bQ_8$] By using some algebraic manipulations and \textsf{A4}, we get:
    \begin{align}
        \expec & \left\{\bs(i)\bs^\top\!(i) \bv(i) \bigg[\sum_{m=1}^N \dfrac{\hat\bR_{tt,m}(i) \hat{\bgamma}(i)}{{\Delta_m(i)}}  \bigg]^\top\right\}
        \nonumber \\
        %& \approx
        %\expec\bigg\{\bs(i)\bs^\top\!(i) \bv(i) \bigg[\sum_{m=1}^N \dfrac{\expec\{\hat\bR_{tt,m}(i)\} \hat{\bgamma}(i)}{\hat{\Lambda}_m(i)}\bigg]^\top \bigg\}
        %\nonumber \\
        & = \expec\big\{\bs(i)\bs^\top\!(i)\big\} \expec\bigg\{\bv(i) \bigg[\sum_{m=1}^N \dfrac{\hat\bR_{tt,m}(i) \hat{\bgamma}(i)}{{\Delta_m(i)}}\bigg]^\top \bigg\}
        \nonumber \\ 
        & = \bR_{ss} \bQ_7\,.
        \label{eq:term8}
    \end{align}
    \item[$\bQ_9$] By approximating $\hat\bR_{tt,m}(i)\approx \bR_{tt,m} $ and using assumption \textsf{A4}, we obtain:
    \begin{align}
        \expec & \left\{\bs(i) e_{0}(i) \bigg[\sum_{m=1}^N \dfrac{\hat\bR_{tt,m}(i) \hat{\bgamma}(i)}{\Delta_m(i)}  \bigg]^\top\right\}
        \nonumber \\
        & = \expec\big\{\bs(i) e_{0}(i)\big\}\,\expec\bigg\{ \sum_{m=1}^N \dfrac{\hat\bR_{tt,m}(i) \hat{\bgamma}(i)}{\sqrt{\hat{\bgamma}^\top\!(i) \hat\bR_{tt,m}(i)\hat{\bgamma}(i)}}\bigg\}^\top
        \nonumber \\ 
        &= \left(\br_{sy} - \bR_{ss}\bgamma^*\right)
        \label{eq:term9} \\
        &\quad\times\sum_{m=1}^N\dfrac{\bR_{tt,m}\bmu(i)}{\sqrt{\tr\{\bR_{tt,m}\bSig(i)\}+\bmu^\top\!(i)\bR_{tt,m}\bmu(i)}}\nonumber.
    \end{align}
    
    \item[$\bQ_{10}$] We successively have:
    \end{itemize}
    \vspace{-0.2cm}
    \begin{align}
        & \expec \left\{\bigg[ \sum_{m=1}^N \dfrac{\hat\bR_{tt,m}(i) \hat{\bgamma}(i)}{{\Delta}_m(i)} \bigg]\bigg[\sum_{p=1}^N \dfrac{\hat\bR_{tt,p}(i) \hat{\bgamma}(i)}{{\Delta}_p(i)}  \bigg]^\top 
        \right\} 
        \nonumber \\
        \approx & \sum_{m=1}^N \sum_{p=1}^N \expec\Bigg\{  \dfrac{\bR_{tt,m} \hat{\bgamma}(i)\hat{\bgamma}^\top\!(i)\bR_{tt,p}}{\sqrt{\hat{\bgamma}^\top\!(i) \bR_{tt,m}\hat{\bgamma}(i) \hat{\bgamma}^\top\!(i) \bR_{tt,p}\hat{\bgamma}(i)}} \Bigg\}
        \nonumber \\
        \approx & \sum_{m=1}^N \sum_{p=1}^N  \dfrac{\bR_{tt,m} \left[\bSig(i) +\bmu(i)\,\bmu^\top\!(i)\right]\big\} \bR_{tt,p}}{\sqrt{\expec\big\{\hat{\bgamma}^\top\!(i) \bR_{tt,m}\hat{\bgamma}(i)\hat{\bgamma}^\top\!(i) \bR_{tt,p}\hat{\bgamma}(i)\big\}}}\,.
        \label{eq:term10}
    \end{align}
    \begin{itemize}
    \item[] Considering hypothesis \textsf{A6}, the denominator of this expression consists of the expectation of a linear combination of fourth order moments of Gaussian random variables. Using~\cite{kumar1973expectation}, we have:
    \begin{align}
        &\expec \big\{\hat{\bgamma}^\top\!(i) \bR_{tt,m}\hat{\bgamma}(i)\hat{\bgamma}^\top\!(i) \bR_{tt,p}\hat{\bgamma}(i)\big\} \nonumber \\
        &= 4\bmu^\top \bR_{tt,m} \bSig \bR_{tt,p} \bmu  
        %\nonumber \\ &
        + 2\tr\big\{\bR_{tt,m} \bSig \bR_{tt,p} \bSig \big\}
        \nonumber \\ 
        & + \Big(\bmu^\top \bR_{tt,m} \bmu + \tr\big\{\bR_{tt,m}\bSig\big\}\Big) \nonumber \\ & \times
        \Big(\bmu^\top \bR_{tt,p} \bmu + \tr\big\{\bR_{tt,p}\bSig\big\}\Big)\,,
    \end{align}
    %where $\bmu=\expec\{\hat{\bgamma}(i)\}$ and $\bSig=\expec\{\hat{\bgamma}(i)\hat{\bgamma}^\top\!(i)\} - \bmu\bmu^\top$.
\end{itemize}

To conclude, using the results from relations \eqref{eq:term2}--\eqref{eq:term10}, we arrive at the following recursive equation for $\bV(i)$:
\begin{align}
    \label{eq:bV}
    \bV( & i+1) = \bV(i)
    - \mu \sym\{\bV(i)\bR_{ss}\} 
    + \mu\sym\{\bQ_3\} 
    \nonumber \\ &
    - \mu^2\sym\{\bQ_4\}
    + \mu^2 \bQ_6
    - \mu\eta \sym\{\bQ_7\} + \mu^2\eta^2 \bQ_{10}
    \nonumber \\ & 
    + \mu^2\bQ_5 + \mu^2\eta \sym\{\bR_{ss}\bQ_7\}
    - \mu^2\eta \sym\{\bQ_9\}\,.
    %\,.\label(eq:bV)
\end{align}
% where:
% \begin{align}
%     \big[\mathfrak{S}_{v}\big]_{u,v} &= \sum_{\ell} \sum_{m} \expec\{s_u(i) s_{\ell}(i) s_m(i) s_v(i)\} \big[\bV(i)\big]_{\ell,m}\,.
% \end{align}
%and
%\begin{align}
%    \big[\mathfrak{Y}_{5}\big]_{u,v} ={} & \expec\{s_u(i) s_v(i) y_n^2(i) \} - 2 \sum_{p} \gamma^*_p \expec\{s_u(i) s_v(i) s_p(i) y_n(i) \}
%    \nonumber \\ & 
%    + \sum_{\ell}\sum_m \gamma^*_{\ell} \gamma^*_m \expec\{s_u(i) s_v(i) s_{\ell}(i) s_m(i)\}
%\end{align}
The unknown moments are calculated in closed-form in the supplementary material for this paper.

%\todo[inline]{this is slightly confusing, since the first term in the rhs. is $\bV(i)$, an option could be to say that the moments involved in (80) are all computed in Annexes A-B through A-D. }

Before going further, we define the Mean Square Deviation:
\begin{align}
	\text{MSD}(i) & \triangleq \expec\left\{\left\Vert  \hat{\bgamma}(i) - \bgamma^* \right\Vert^2\right\} 
	\nonumber \\
	& = \expec\left\{\left\Vert  \bv(i) \right\Vert^2\right\} = \tr\left\{\bV(i)\right\}\,.
\label{eq:msd_def}
\end{align}
This quantity is useful in determining the performance of the algorithm.

\subsection{Steady-state performance when \texorpdfstring{$\eta=0$}{n = 0}}

Assuming a small enough step-size $\mu$ that verifies \eqref{eq:stab_cond}, with relation \eqref{eq:weight_error_closedForm} we have:
\begin{equation}
    \begin{split}
	\bv(\infty) &= \lim_{i\to \infty}\expec\left \{\bv(i)\right\} \\
	&= \Zero\,.
	\end{split}
\end{equation}
% \begin{align}
% 	 \lim_{i\to \infty}\expec\left \{\bv(i)\right\} = & \bv(\infty) \nonumber \\ 
% 	= &  \Zero + 
% 	\sum_{\ell=0}^i (\bI - \mu \bR_{ss})^{i-\ell} \nonumber \\ & \times \left(\expec\left\{\bs(\ell)y_n(\ell)\right\}-\bR_{ss}\bgamma^*\right)  \nonumber \\
% 	= & \mu \dfrac{\bI}{\bI - (\bI - \mu \bR_{ss})} \nonumber \\ & \times \left(\expec\left\{\bs(i)y_n(i)\right\}-\bR_{ss}\bgamma^*\right) \nonumber \\
% 	= &  \bR_{ss}^{-1}\expec\left\{\bs(i)y_n(i)\right\} - \bR_{ss}^{-1} \br_{sy}\,,
% \end{align}
%
%%\todo[inline]{also, isn't the end result of eq 81 above always zero? this could perhaps simplify things (or maybe not)
%
%[it is zero, you're absolutely right; in turn the simplification, given (68)-(69) is that pbar(inf) = vec(Q5)]}
%
%\todo[inline]{the notation $\br_{sy} \triangleq \expec\left\{y_n(i)\bs(i)\right\}$ seems very convenient, maybe we could define it in/after equation 51 and use it in the other equations that follow}
%
We shall now evaluate $\bV(\infty)=\lim_{i\to \infty}\bV(i)$ using \eqref{eq:bV}. With $\bv(\infty)=\Zero$, we know that $\bQ_3(\infty) = \bQ_4(\infty) = \Zero$. Using vectorization of \eqref{eq:bV} together with the Kronecker product gives:
\begin{equation}
    \vc\{\bV(\infty)\} = 
    \mu^2 \left(\bI_2 - \boldF_0\right)^{-1}\vc\{\bQ_5\}\,,
\end{equation}
with:
\begin{equation}
    \boldF_0 = \bI_2 - \mu (\bI \otimes \bR_{ss} + \bR_{ss} \otimes \bI) + \mu^2\boldF_1\,,
\label{eq:F0_5}
\end{equation}
where operator $\otimes$ is the Kronecker product.
In these expressions, $\bI_2$ is the identity matrix of size $k_s^2 \times k_s^2$, and $\bI$ is the identity matrix of size $k_s \times k_s$, where $k_s = (N+1)\vert \cD_n \vert$ is the number of entries in block vector $\bs$. The entries of $\boldF_1$ are:
\begin{equation}
    \left[\boldF_1\right]_{u+(\ell-1)k_s,m+(v-1)k_s}=\expec\{s_u(i) s_{\ell}(i) s_m(i) s_v(i)\}\,.
\end{equation}
They are provided in closed-form in the supplementary material for this paper.

Finally, we conclude that:
\begin{equation}
	\text{MSD}(\infty) = \tr\left\{\vc^{-1}\left\{\mu^2 \left(\bI_2 - \boldF_0\right)^{-1}\vc\{\bQ_5\}\right\}\right\}\,.
\end{equation}

% \begin{align}
%     J_{\min} &= \expec\left\{(e_{0}(i))^2\right\} = \expec\left\{\left(y_n(i)-(\bgamma^*)^\top\bs(i)\right) \left(y_n(i)-\bs^\top\!(i)\bgamma^*\right) \right\}
%     \nonumber \\ 
%     &= \expec\{y_n^2(i)\} - 2 (\bgamma^*)^\top \br_{sy} + (\bgamma^*)^\top \bR_{ss} \bgamma^*\,.
% \end{align}

\section{Experimental validation}
\label{sect:experimental}

The performance of algorithm~\eqref{eq:update} has already been illustrated in the preliminary work~\cite{moscu2020online}, both with simulated and real-world data. Consequently, the experiments in this section will be restricted to model validation.

\subsection{Signal model}
\label{subsub:eta_zero}

We considered a simulation scenario with i.i.d. zero-mean Gaussian data $\by(i)$ with covariance matrix $\bR_{yy}$ satisfying the following model:
\begin{equation}
    \by(i) = \bA \by(i) + \bv(i)\,,
    \label{eq:linear_model_1}
\end{equation}
where $\bA$ is the adjacency matrix defined as:
\begin{equation}
    \label{eq:adj_mat}
    \bA = \begin{pmatrix}
    0 & 1 & 0 & 1 & 1\\
    1 & 0 & 1 & 0 & 1\\
    1 & 0 & 0 & 1 & 0\\
    0 & 1 & 1 & 0 & 1\\
    1 & 0 & 1 & 1 & 0
    \end{pmatrix}\,.
\end{equation}
and $\bv(i)$ is a zero-mean white Gaussian noise with covariance matrix $\sigma^2_v\bI_5$ and $\sigma_{v}=0.05$. Note that $\bR_{yy}$ satisfies:
\begin{equation}
    \bR_{yy} = [\bI-\bA]^{-1}\bR_{vv}([\bI-\bA]^{-1})^\top\,.
\end{equation}
This model, although not nonlinear, offers exact knowledge of the statistical properties necessary to validate the models derived in this paper. Validation on a purely nonlinear system follows in subsection \ref{subsec:lin_valid}. The reader is also referred to~\cite{moscu2020online} to assess the methods' behavior in real-world nonlinear settings. 

We used our algorithm with the Gaussian kernel~\eqref{eq:gaussian_ker}, with kernel bandwidth $\sigma = 1$. Each node stored a dictionary~$\cD_n$ with $6$ elements chosen uniformly in $[-1,1]^4$. Simulations were averaged over 100 Monte-Carlo runs. Results presented hereafter focus on node $n=1$.

% ------------------------------------------
\subsection{Computing the optimal coefficients}

% In order to perform the analysis, we need to compute the optimal coefficients $\gamma^*$, which we obtain through the following optimization problem, analogous to problem~\eqref{eq:final_cost} written in expectation form:
% \begin{align}
%     & \min_{\bgamma}\,
%     \dfrac{1}{2} 
%     \expec\left\{\Big\|y_n(i) - \bgamma^\top \bs(i) \Big\|^2\right\} 
%     \nonumber \\ & \qquad
%     + \eta_n\sum_{m=1}^N \sqrt{\bgamma^\top \expec\left\{\hat\bR_{tt,m}(i)\right\}\bgamma} \,.
% \end{align}
% This problem can be written equivalently as:

In order to compute the optimal coefficients, we note that problem~\eqref{eq:opt_problem_online2} can be written equivalently as:
\begin{align}
    & \min_{\bgamma} \,\,
    \dfrac{1}{2} \bgamma^\top \bR_{ss} \bgamma
    - \bgamma^\top \br_{sy}  + \eta \sum_{m=1}^N \|\bC_m^\top\bgamma\|_2\,,
\end{align}
where $\bC_m\bC_m^\top=\bR_{tt,m}$.
We solved this optimization problem using the CVX package to get $\bgamma^*$ \cite{cvx,gb08}.

The MSD was evaluated experimentally using:
\begin{equation}
\text{MSD}(i) = \expec\left\{\left\Vert \hat{\bgamma}(i) - \bgamma^* \right\Vert^2 \right\}\,.
\label{eq:exp_msd_exp}
\end{equation}

% ------------------------------------------
\subsection{Parameter \texorpdfstring{$\eta = 0$}{n=0}}

Figure~\ref{subf:ME_theo5} shows the theoretical and experimental learning curves of the entries of the parameter vector $\bgamma$.
Figure~\ref{subf:MSD_theo5} shows both the theoretical and experimental MSD curves, as well as the steady-state MSD. The theoretical curves closely follow the theoretical ones, both in the analyses in the mean and mean square sense. They also validate the approximation \eqref{eq:approx_reg} of the regularization term which is in the update \eqref{eq:expec_weight_error}.

\begin{figure}[t]
	\subfigure[Theoretical and experimental entries of $\bgamma$. Blue dashed lines are the experimental curves, while red lines are the theoretical ones]{\includegraphics[width=1.00\columnwidth,clip, trim=1.6cm 6.2cm 2.2cm 7.1cm]{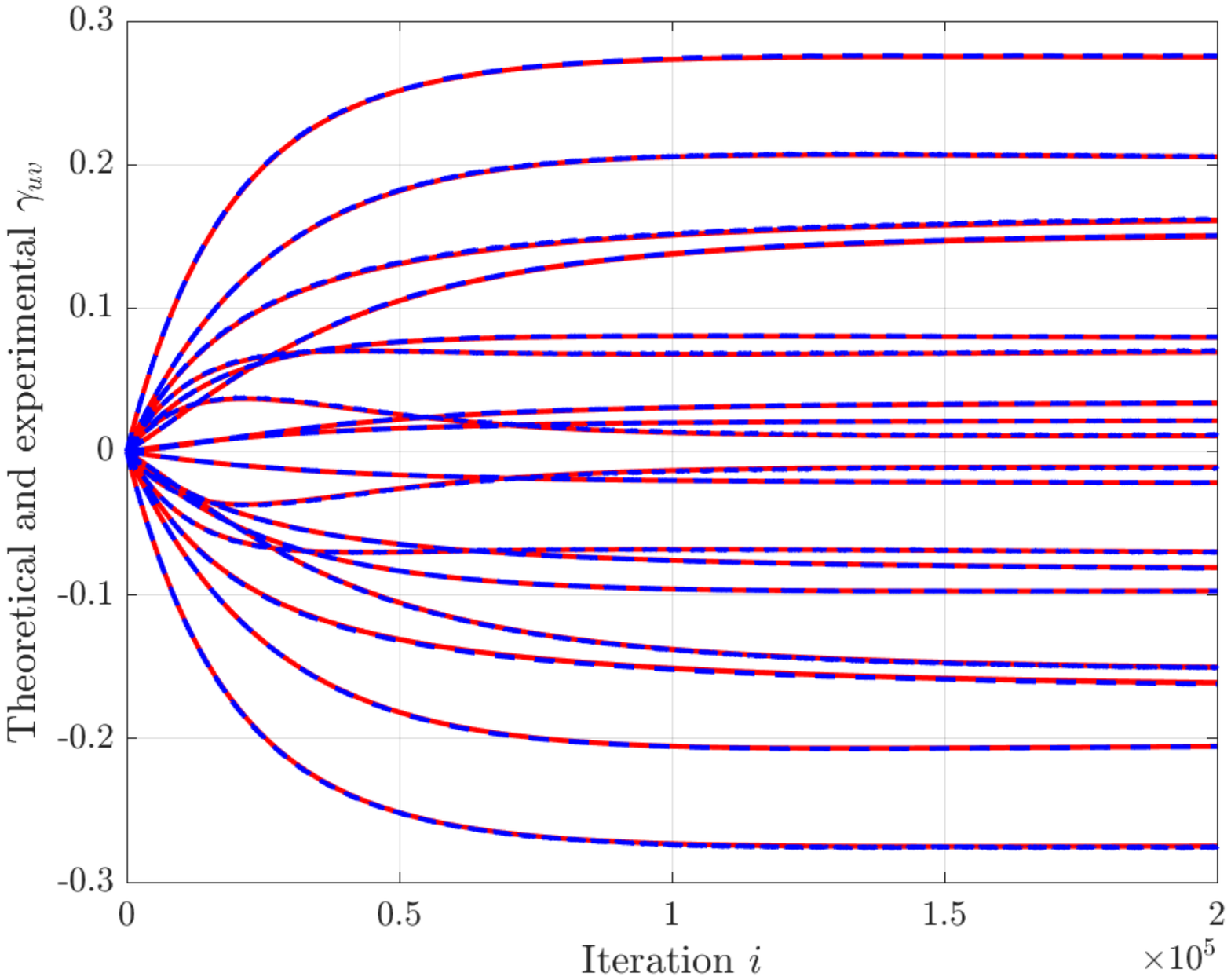}
		\label{subf:ME_theo5}}
	\hfill
	\subfigure[Experimental, steady-state, and theoretical MSD]{\includegraphics[width=1.00\columnwidth,clip, trim=1.6cm 6.2cm 2.2cm 7.1cm]{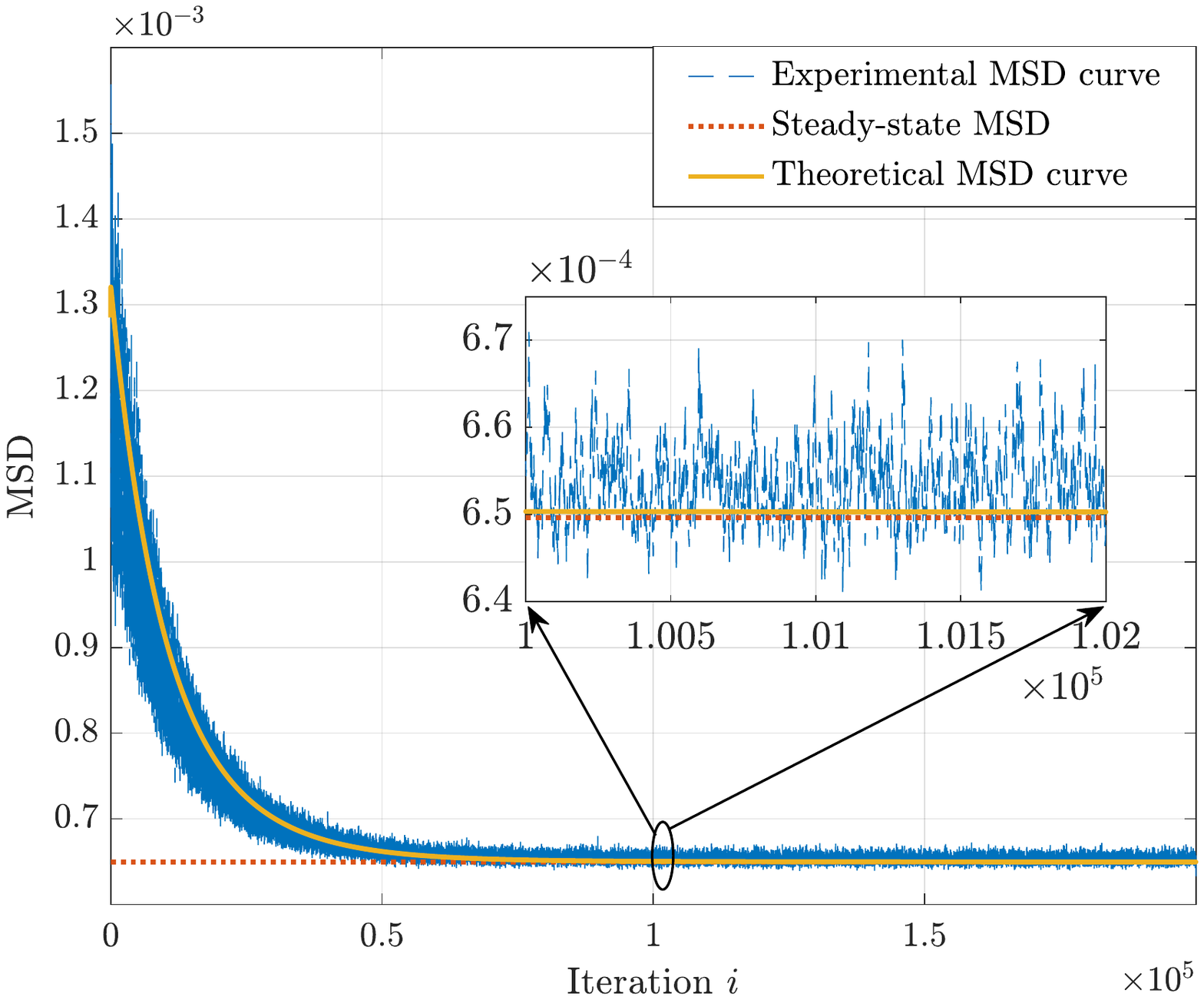}
		\label{subf:MSD_theo5}}
	\caption{Analysis validation in mean and mean square sense, for $\eta = 0$}
	\label{fig:me_mse5}
\end{figure}

\subsection{Parameter \texorpdfstring{$\eta > 0$}{n>0}}

The conditions of this experiment were the same as for the previous case, with $\eta = 1\cdot10^{-4}$.
%
%\todo[inline]{here we might need to say that for the empirical algorithm, we initialized the $\hat\bR_{tt,m}(1)$ using the average value of a lot of samples (say 10e4), and used a very small parameter $\alpha$ in the recursive average of eq. \eqref{eq:cumulAvg_T_forget}, in order to avoid introducing errors in the model due to excessive bias or variance in $\hat\bR_{tt,m}(i)$. However, we remark that this is not a limiting factor for the algorithms, which achieves good results even without a good initialization and in a more challenging scenarios. This seems to be the most delicate point of the analysis, so it would be interesting to know what Cédric and Jose think.}

The learning curves are presented in Fig.~\ref{fig:me_mse_reg5}.
As for the previous case, we observe that the theoretical curves closely fit the experimental ones. This validates the assumptions and approximations employed in the analysis to make it tractable.

\begin{figure}[t]
	\subfigure[Theoretical and experimental entries of $\bgamma$. Blue dashed lines are the experimental curves, while red lines are the theoretical ones]{\includegraphics[width=1.00\columnwidth,clip, trim=1.6cm 6.2cm 2.2cm 7.1cm]{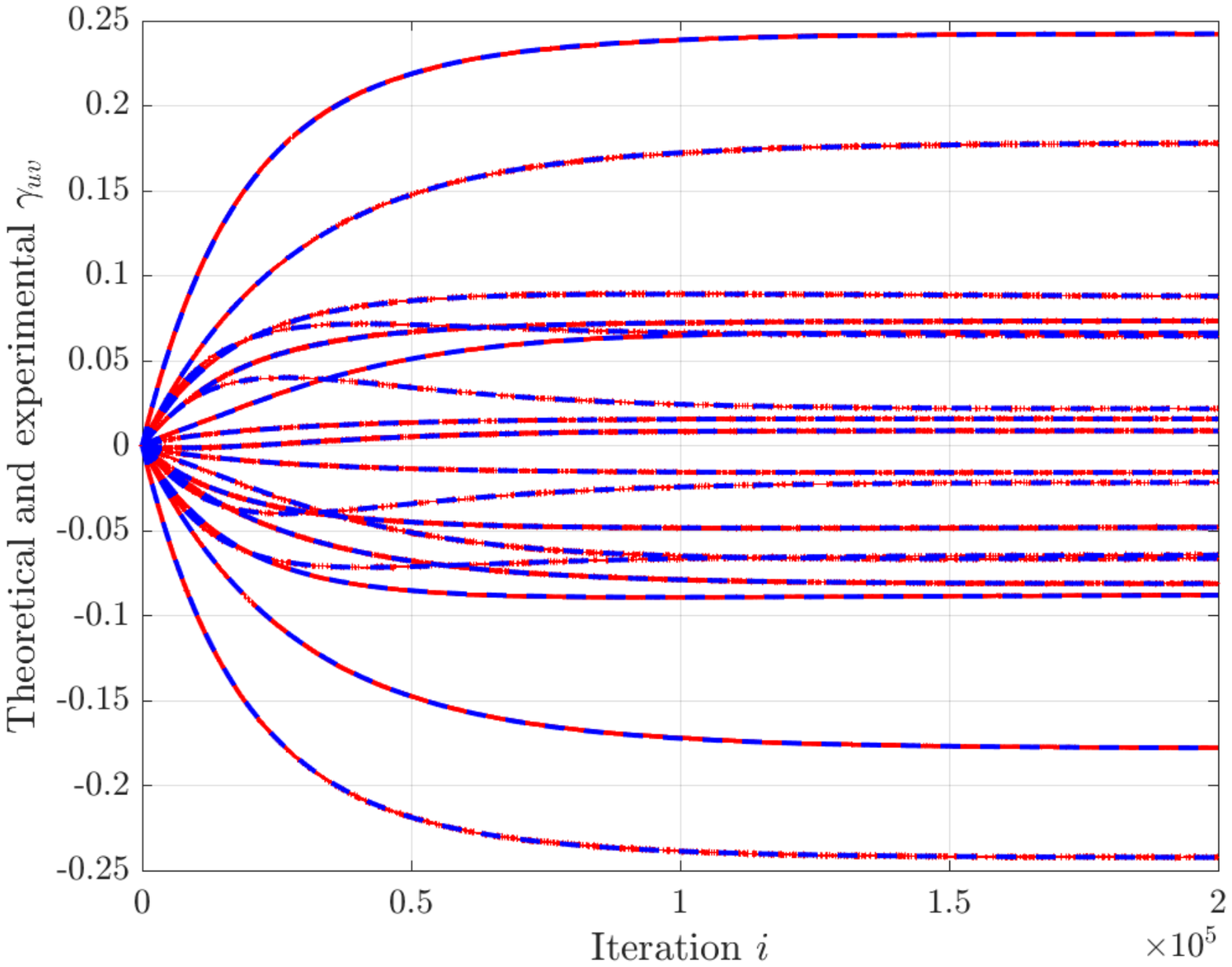}
		\label{subf:ME_reg_theo5}}
	\hfill
	\subfigure[Experimental and theoretical MSD]{\includegraphics[width=1.00\columnwidth,clip, trim=1.6cm 6.2cm 2.2cm 7.1cm]{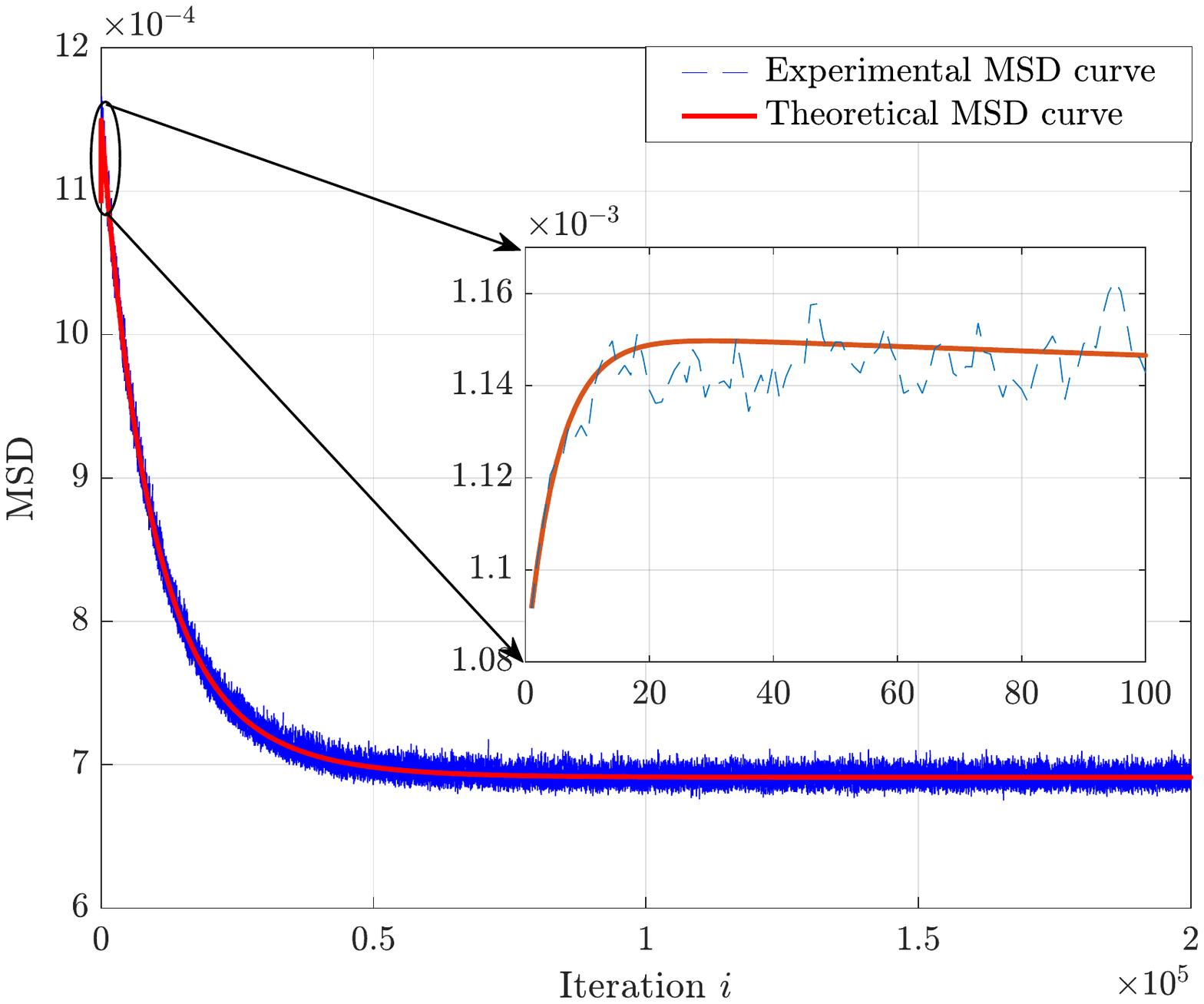}
		\label{subf:MSD_reg_theo5}}
	\caption{Analysis validation in mean and mean square sense, for $\eta = 1\cdot10^{-4}$}
	\label{fig:me_mse_reg5}
\end{figure}

\subsection{Validation with a nonlinear model}
\label{subsec:lin_valid}
%\todo[inline]{Later on, we can rework the comment regarding the behavior of the algorithm with nonlinear data in section V-A}

In order to evaluate how well the model can predict the the behavior of the algorithm in nonlinear settings, we consider an additional example with a 3-node graph with adjacency matrix given by:
\begin{align}
    \bA = \begin{pmatrix}
    0 & 0 & 1\\
    1 & 0 & 1\\
    1 & 0 & 0\\
    \end{pmatrix},
\end{align}
and i.i.d. streaming nodal data generating to agree with the following model:
\begin{align}
    \by(i) &= \underbrace{\begin{bmatrix}
    y_1 - K_1 (y_3+y_1)^3 (K_2 y_1)^{-1} \\
    %
    % x_2 + \big(x_2-K_1 (x_3+x_1)^3 (K_2 x_1)^{-1} \big) \big([0.5+\exp(K_1 (x_3+x_1)^3 (K_2 x_1)^{-1})]^5+1\big)^{-1} \\
    y_2 + \frac{y_2-K_1 (y_3+y_1)^3 (K_2 y_1)^{-1}}{[0.5+\exp(K_1 (y_3+y_1)^3 (K_2 y_1)^{-1})]^5+1} \\
    y_3 + y_1 + K_1 (y_3+y_1)^3 (K_2 y_1)^{-1}
    \end{bmatrix}}_{\boldf(\by(i))}
    + \boldsymbol{\varrho}(i)
    \nonumber
\end{align}
with $K_1=8000$ and $K_2=27$, $\boldsymbol{\varrho}(i)$ being an i.i.d. zero-mean Gaussian random vector with covariance matrix~$\bI_3$, and $\boldf(\by(i))=\big[f_1(\by(i)),f_2(\by(i)),f_3(\by(i))\big]^\top$ the vector-valued function corresponding to the concatenation of $f_n$ in~\eqref{eq:model_central}. Notice that the sample index $i$ was omitted inside $\boldf(\by(i))$ to simplify the notation. Graph signal $\by(i)$ can be sampled from this model by first sampling from $\boldsymbol{\varrho}(i)$ and using the relation $\by(i)=(\operatorname{id}-\boldf)^{-1}(\boldsymbol{\varrho}(i))$, with $\operatorname{id}(\bx)$ the identity function.

We used our algorithm with the Gaussian kernel~\eqref{eq:gaussian_ker}, with kernel bandwidth $\sigma = 1$. Each node stored a dictionary~$\cD_n$ with $4$ elements chosen uniformly in $[-1,1]\times[-1,1]$. Simulations were averaged over 100 Monte-Carlo runs. Results presented hereafter focus on node $n=1$.

Figure~\ref{fig:me_mse5_nonlin} depicts the behavior of the algorithm in mean and mean-square sense without regularization (i.e., for $\eta=0$). Figure~\ref{fig:me_mse_reg5_nonlin} shows the mean and mean-square performance in the case when $\eta = 0.3$. These two figures together show that the conducted analysis is indeed able to predict the behavior of the algorithm without the need of running multiple \mbox{Monte-Carlo} runs. Moreover, even with the highly nonlinear nature of $\boldf$ and $\by(i)$ not being Gaussian distributed, the theoretical curves still follow the experimental ones relatively closely, although a small amount of bias can be observed in steady state.

\begin{figure}[t]
	\subfigure[Theoretical and experimental entries of $\bgamma$. Blue dashed lines are the experimental curves, while red lines are the theoretical ones]{\includegraphics[width=0.98\columnwidth,clip, trim=3.9cm 8.4cm 4.3cm 9.0cm]{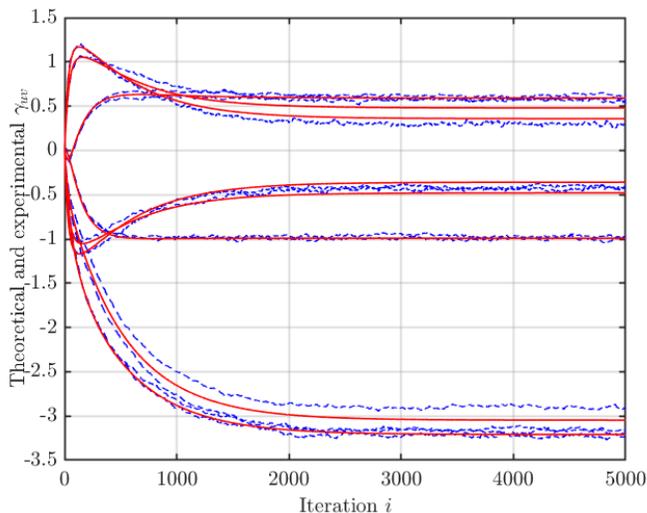}
		\label{subf:ME_theo5_nonlin}}
	\hfill
	\subfigure[Experimental, steady-state, and theoretical MSD]{\includegraphics[width=1.00\columnwidth,clip, trim=3.9cm 8.4cm 4.3cm 9.0cm]{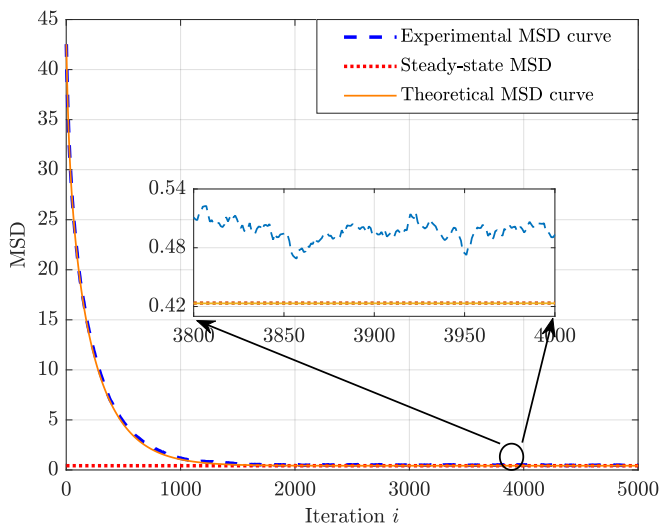}
		\label{subf:MSD_theo5_nonlin}}
	\caption{Analysis validation in mean and mean-square sense, for $\eta = 0$}
	\label{fig:me_mse5_nonlin}
\end{figure}

\begin{figure}[t]
	\subfigure[Theoretical and experimental entries of $\bgamma$. Blue dashed lines are the experimental curves, while red lines are the theoretical ones]{\includegraphics[width=1.00\columnwidth,clip, trim=3.6cm 8.4cm 4.3cm 9.0cm]{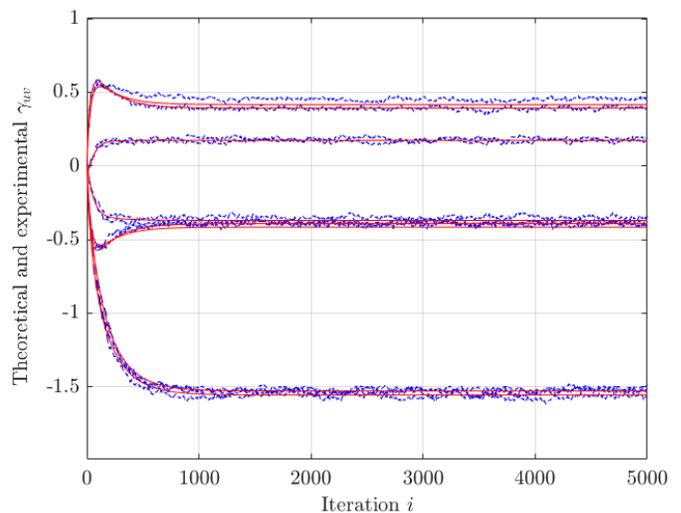}
		\label{subf:ME_reg_theo5_nonlin}}
	\hfill
	\subfigure[Experimental and theoretical MSD]{\includegraphics[width=1.00\columnwidth,clip, trim=3.9cm 8.4cm 4.3cm 9.0cm]{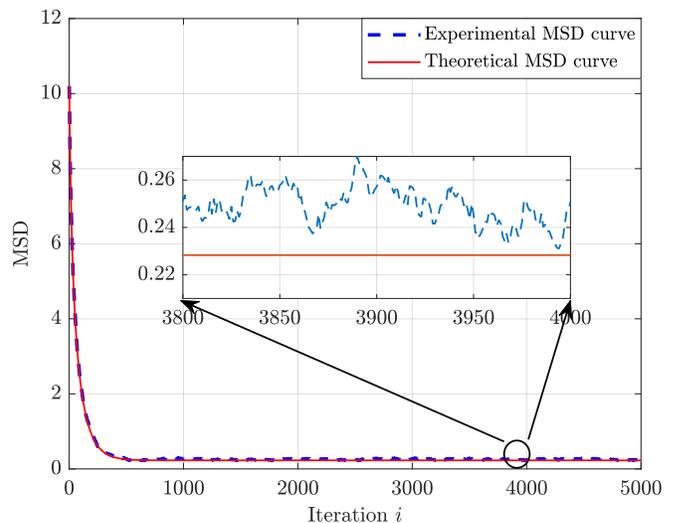}
		\label{subf:MSD_reg_theo5_nonlin}}
	\caption{Analysis validation in mean and mean-square sense, for $\eta = 0.3$}
	\label{fig:me_mse_reg5_nonlin}
\end{figure}

\section{Conclusion}
\label{sect:conclusion}

In this paper, an online graph-topology inference method recently proposed by the authors was recalled and analyzed. While previous works mainly focused on models based on additive interactions between the signals at each node, we considered arbitrary nonlinear interactions between the nodes, which render our model much more general. By encoding the links as partial derivatives of some nonlinear functions, our method benefits from the kernel {machinery} framework to estimate a possibly directed, sparse adjacency matrix. However, it also demonstrates significantly increased complexity that makes its analysis non-trivial. This work proposed a thorough analysis of the algorithm, as well as performance bounds and stability conditions. 
The derived models characterize the performance of the algorithm as a function of important parameters such as the kernel bandwidth or the step-size. In turn, this allows for precise tuning of such parameters in order to obtain a desired performance in transient or in steady state. Moreover, the behavior of the intuitive partial-derivative-based sparsity regularizer was analyzed in-depth, thus completely characterizing the method.
The experimental results, as well as the algorithm analysis, show that the proposed method can lead to more accurate estimates for more general nonlinear systems.

\bibliographystyle{IEEEtran}
\bibliography{references}

\appendices

\section{Proof of Theorem \ref{theorem1}}
\label{proof_theo1}

\begin{proof}
Using orthogonal projection, we can decompose any function $f_n \in\mathcal{H}_{\kappa}$ as the sum of two functions:
\begin{align} 
    f_n = f_n^{\|} + f_n^{\perp}\,,
    \label{eq:decomp_fn_representer_annex}
\end{align}
where $f_n^{\perp}$ is orthonormal to $f_n^{\|}$, that is, $\langle f_n^{\|},f_n^{\perp}\rangle_{\mathcal{H}}=0$), and $f_n^{\|}$ can be written as:
\begin{align}
    f_n^{\|} = \sum_{\ell=1}^i \alpha_{\ell} \kappa\left(\cdot,\by(\ell)\right)
    + \sum_{m=1}^N \sum_{p=1}^{i} \sum_{q=1}^{L_m} \beta_{m,p,q} \kappa_{\partial_{m,q}}\left(\cdot,\by(p)\right)\,,
    \label{eq:representer_theorem_parallel_annex}
\end{align}
where $\alpha_\ell$ and $\beta_{m,p,q}$ are coefficients. This means that $f_n^{\|}$ lies in the span of $\kappa\left(\cdot,\by(\ell)\right)$ and $\kappa_{\partial_{m,q}}\left(\cdot,\by(p)\right)$, from which the orthogonality condition implies that $\langle f_n^{\perp},\kappa\left(\cdot,\by(\ell)\right)\rangle_{\mathcal{H}}=0$ and $\langle f_n^{\perp},\kappa_{\partial_{m,q}}\left(\cdot,\by(p)\right)\rangle_{\mathcal{H}}=0$, for all $\ell,m,p,q$.

Assume that the kernel $\kappa(\cdot,\cdot)$ is at least twice differentiable. Then, the following relation holds~\cite{zhou2008derivativeKernels}:
\begin{align}
    \mathcal{H} \ni \dfrac{\partial f_{n}(\by)}{\partial y_{{m,q}} } 
    = \langle f_{n}, \kappa_{\partial_{m,q}}(\cdot,\by) \rangle_{\mathcal{H}}\,,
\end{align}
where $y_{{m,q}}$ is the $q$-th entry of $\by_{m}$.

Let us plug then decomposition~\eqref{eq:decomp_fn_representer_annex} of $f_n$ in problem~\eqref{eq:sparse_model}. For the first term, using the reproducing property, we have:
\begin{align}
    f_n(\by(\ell)) & = \langle f_n, \kappa(\cdot,\by(\ell)) \rangle_{\mathcal{H}}
    \\
    & = \langle  f_n^{\|} + f_n^{\perp}, \kappa(\cdot,\by(\ell)) \rangle_{\mathcal{H}}
    \\
    & = \langle  f_n^{\|}, \kappa(\cdot,\by(\ell)) \rangle_{\mathcal{H}}\,,
\end{align}
for all $\ell=1,\ldots,i$, where $\langle f_n^{\perp}, \kappa(\cdot,\by(\ell))\rangle_{\mathcal{H}}=0$ because $f_n^{\perp}$ is orthogonal to each term in~\eqref{eq:representer_theorem_parallel_annex}. For the second term:
\begin{align}
    \dfrac{\partial f_{n}(\by(p))}{\partial y_{{m,q}} } & = \dfrac{\partial f_{n}(\by)}{\partial y_{{m,q}} } \bigg|_{\by = \by(p)}
    \\
    & = \langle f_n^{\|} + f_n^{\perp}, \kappa_{\partial_{m,q}}(\cdot,\by(p)) \rangle_{\mathcal{H}}
    \\
    & = \langle f_n^{\|}, \kappa_{\partial_{m,q}}(\cdot,\by(p)) \rangle_{\mathcal{H}}\,,
\end{align}
for all $p=1,\ldots,i$, where $\langle f_n^{\perp}, \kappa_{\partial_{m,q}}(\cdot,\by(p)) \rangle_{\mathcal{H}}=0$ since $f_n^{\perp}$ is perpendicular to each term in~\eqref{eq:representer_theorem_parallel_annex}. For the last term, we have:
\begin{align}
    \psi\big(\|f_n\|_{\cH}\big) 
    & = \psi\big(\|f_n^{\|} + f_n^{\perp}\|_{\cH}\big)
    \\
    & = \psi\big(\|f_n^{\|}\|_{\cH} + \|f_n^{\perp}\|_{\cH}\big)\,.
\end{align}
Since $\|f_n^{\perp}\|_{\cH}$ does not influence the first two terms in problem~\eqref{eq:sparse_model}, if $\psi$ is monotonically increasing, then the solution is such that $\|f_n^{\perp}\|_{\cH}=0$.
\end{proof}

\section{Computing the terms in the optimization problem that depend on \texorpdfstring{$f_{n,\cD_n}$}{fD}}
\label{annex:fnd}

Let us recall the optimization problem \eqref{eq:opt_problem_online2_func}:
\begin{align}
    \min_{f_{n,\cD_n}} \,\,
    & \dfrac{1}{2} 
    \expec\Big\{\Big\|y_n(i) - f_{n,\cD_n}(\by(i)) \Big\|^2 \Big\}
    \nonumber \\&
    + \eta \sum_{m=1}^N \sqrt{\expec\bigg\{\bigg(\frac{\partial f_{n,\cD_n}(\by(i))}{\partial y_{m}(i)}\bigg)^2\bigg\}}\,.
    %\label{eq:opt_problem_online2_func}
\end{align}
We now evaluate the terms involving $f_{n,\cD_n}$, using the reproducing property and~\eqref{eq:der_rkhs}.
For the first term, we have:
%
%\todo[inline]{if we become really pressed for more space, I think we can move the following equations (27, 28, 29 and 30) to an additional section at the end of the document}
%
\begin{align}
    f_{n,\cD_n}(\by(i)) &= \big\langle f_{n,\cD_n}, \kappa(\cdot,\by(i)) \big\rangle_{\cH}
    \\
    & = \sum_{p=1}^{\vert \cD_n \vert} \alpha_{p} \big\langle\kappa(\cdot,\by(\omega_p)),\kappa(\cdot,\by(i)) \big\rangle_{\cH} 
    \nonumber \\
    & + \sum_{m=1}^{N} \sum_{q=1}^{\vert \cD_n \vert} \beta_{m,q} \big\langle\kappa_{\partial_m}(\cdot,\by(\omega_q)), \, \kappa(\cdot,\by(i)) \big\rangle_{\cH}\,, \nonumber
\end{align}
from which:
\begin{subequations}
    \begin{empheq}[left={}]{align}
      & \big\langle \kappa(\cdot,\by(\omega_p)), \kappa(\cdot,\by(i)) \big\rangle_{\cH} = \kappa(\by(\omega_p),\by(i)) \label{subeq:f_a} \,, \\
      & \big\langle \kappa_{\partial_m}(\cdot,\by(\omega_q)), \kappa(\cdot,\by(i)) \big\rangle_{\cH} = \frac{\partial \kappa(\by(\omega_{q}),\by(i))}{\partial y_m(\omega_q)}  \label{subeq:f_b} \,.
    \end{empheq}
\end{subequations}
For the second term, we have:
\begin{align}
    &\frac{\partial f_{n,{\cD_n}}(\by(i))}{\partial y_{m_2}(i)} =  \big\langle f_{n,{\cD_n}}, \kappa_{\partial_{m_2}}(\cdot,\by(i)) \big\rangle_{\cH}
    \nonumber\\
    & = \sum_{p=1}^{\vert \cD_n \vert} \alpha_{p} \big\langle\kappa(\cdot,\by(\omega_p)),\kappa_{\partial_{m_2}}(\cdot,\by(i)) \big\rangle_{\cH} 
    \nonumber \\
    & + \sum_{m_2=1}^{N} \sum_{q=1}^{\vert \cD_n \vert} \beta_{m_2,q} \big\langle\kappa_{\partial_{m_2}}(\cdot,\by(\omega_q))\,,  \kappa_{\partial_{m_2}}(\cdot,\by(i)) \big\rangle_{\cH}\,,
\end{align}
from which:
\begin{subequations}
    \begin{empheq}[left={\hspace{-0.19cm}}]{align}
    & \big\langle \kappa(\cdot,\by(\omega_p)),  \kappa_{\partial_{m_2}}(\cdot,\by(i)) \big\rangle_{\cH} = \frac{\partial \kappa(\by(\omega_p),\by(i))}{\partial y_{m_2}(i)} \label{subeq:df_a} \,, \\
    & \big\langle \kappa_{\partial_{m_1}}(\cdot,\by(\omega_q)),  \kappa_{\partial_{m_2}}(\cdot,\by(i)) \big\rangle_{\cH} = \frac{\partial \kappa(\by(\omega_q),\by(i))}{\partial y_{m_1}(\omega_q)y_{m_2}(i)}  \label{subeq:df_b} \,.
    \end{empheq}
\end{subequations}

\section{Calculation of \texorpdfstring{$\bz_m(i)$, $\bzeta_m(i)$, and $\bl_{m_1,m_2}(i)$}{zm(i), zetam(i), and lm1m2(i)}}
\label{annex:gaussian}

With the Gaussian kernel $\kappa(\ba,\bb) = \exp \left( - {\Vert \ba - \bb \Vert^2}/{2\sigma^2} \right)$, quantities \eqref{eq:z}--\eqref{eq:l} are given by:
\begin{equation*}
    \begin{split}
    &[\bz_m(i)]_q = \kappa(\by(i),\by(\omega_q)) \left(\dfrac{y_m(i) - y_m(\omega_q)}{\sigma^2} \right) \\
    &[\bzeta_m(i)]_q = -[\bz_m(i)]_q
    \end{split}\,,
\end{equation*}

\begin{align}
    &[\bl_{m_1,m_2}(i)]_q =
    \nonumber \\ &
    \begin{cases}
    - \kappa(\by(i),\by(\omega_q)) \dfrac{y_{m_1}(i) - y_{m_1}(\omega_q)}{\sigma^2} \dfrac{y_{m_2}(i) - y_{m_2}(\omega_q)}{\sigma^2} &, m_1 \neq m_2  \\
    - \kappa(\by(i),\by(\omega_q)) \left( \dfrac{\big(y_{m_1}(i) - y_{m_1}(\omega_q) \big)^2}{\sigma^4} - \dfrac{1}{\sigma^2} \right) &, m_1 = m_2
\end{cases}\nonumber\,.
\end{align}
%with $\Upsilon(i,\omega_q) = \exp \left( -\dfrac{1}{2\sigma^2} \Vert \by(i) - \by(\omega_q) \Vert^2 \right)$.

% =========================================
% TOGGLE THIS TO SHOW SUPPLEMENTAL MATERIAL IN THE PAPER
% \iffalse
\iftrue

\clearpage

\onecolumn

\setcounter{page}{1}
\setcounter{equation}{0}

\title{Graph topology inference with derivative-reproducing property in RKHS: \\ algorithm and convergence analysis}
\maketitle

\begin{center}
    {\Large
    {Supplementary material for the paper \vspace{5mm} \\\bf{Graph topology inference with derivative-reproducing property in RKHS: \\ algorithm and convergence analysis}}}
\end{center}

\bigskip

% {\large
The goal of the supplementary material is to provide detailed computations of quantities $\bR_{ss}$, $\bR_{tt,m}$, $\expec\{s_u(i)s_v(i)s_a(i)s_b(i)\}$, $\expec\{s_u(i)s_v(i)s_a(i)y_n(i)\}$, $\expec\{s_u(i)s_v(i)y_n^2(i)\}$ and $\expec\{s_u(i)y_n(i)\}$.
%}

% \bigskip

We introduce vector $\widetilde{\by}(i)=[\by^\top\!(i),\,y_n^\top\!(i)]^\top$, whose $k$-th entry is denoted by  $\widetilde{y}_k(i)$. The expectations involved in computing $\bR_{tt,m}$ (with $\bR_{ss}$ being one of its particular cases), $\expec\{s_u(i)s_v(i)s_a(i)s_b(i)\}$, $\expec\{s_u(i) s_v(i) s_a(i) y_n(i)\}$, $\expec\{s_u(i) s_v(i) y_n^2(i) \}$ and $\expec\{s_u(i)y_n(i)\}$ can be expressed generically as\footnote{Note that the $i$ index was omitted in~\eqref{eq:generic_form} to simplify the notation.}:
\begin{align}
    \expec\left\{f(\widetilde{\by}) \triangleq
    (\widetilde{y}_{h_1}-x_{h_2})^{\iota_1} 
    (\widetilde{y}_{h_3}-x_{h_4})^{\iota_2} 
    (\widetilde{y}_{h_5}-x_{h_6})^{\iota_3} 
    (\widetilde{y}_{h_7}-x_{h_8})^{\iota_4}
    \exp\left( -\dfrac{1}{2}c_4 \widetilde{\by}^\top\bB_0\widetilde{\by} - \bx_4^\top \bB_4^\top \widetilde{\by} + \dfrac{1}{2}\bx_4^\top\bQ_4\bx_4 \right) \right\}\,,
    \label{eq:generic_form}
\end{align}
where $\bx_4=[\by^\top\!(\omega_p),\by^\top\!(\omega_q),\by^\top\!(\omega_r),\by^\top\!(\omega_s)]^\top$ represents the fixed dictionary elements. % \corange{(I guess we need some new notation)}
Also, 
% $\bB_0=\begin{bmatrix} \bI & \Zero \\ \Zero & 0\end{bmatrix}$, $c_4 = \dfrac{1+ \sum_{i=1}^{3}c_i}{\sigma^2}$, $\bB_4 = - \dfrac{1}{\sigma^2}\begin{bmatrix} \bI & c_3\bI & c_3c_2\bI & c_3c_2c_1\bI \\ \Zero & \Zero & \Zero & \Zero \end{bmatrix}$, and $\bQ_4 = -\dfrac{1}{\sigma^2} \begin{pmatrix} \bI & \Zero & \Zero & \Zero \\
% \Zero & c_3\bI & \Zero & \Zero \\
% \Zero & \Zero & c_3c_2\bI & \Zero \\
% \Zero & \Zero & \Zero & c_3c_2c_1\bI\end{pmatrix}$.

\begin{align}
    & \bB_0 = \begin{bmatrix} \bI & \Zero \\ \Zero & 0\end{bmatrix}
    \\
    & c_4 = \dfrac{1+ \sum_{i=1}^{3}c_i}{\sigma^2}
    \\
    & \bB_4 = - \dfrac{1}{\sigma^2}\begin{bmatrix} \bI & c_3\bI & c_3c_2\bI & c_3c_2c_1\bI \\ \Zero & \Zero & \Zero & \Zero \end{bmatrix}
    \\
    & \bQ_4 = -\dfrac{1}{\sigma^2} \begin{pmatrix} \bI & \Zero & \Zero & \Zero \\
    \Zero & c_3\bI & \Zero & \Zero \\
    \Zero & \Zero & c_3c_2\bI & \Zero \\
    \Zero & \Zero & \Zero & c_3c_2c_1\bI\end{pmatrix}
\end{align}

In the previous quantities, $c_1,c_2,c_3$ represent binary selection variables, accounting for the following possible cases:
\begin{equation}
    \begin{cases}
        c_1=c_2=c_3=1,\,\text{for term }\expec\{s_u(i)s_v(i)s_a(i)s_b(i)\} \\
        c_1=0,c_2=c_3=1,\,\text{for term }\expec\{s_u(i)s_v(i)s_a(i)y_n(i)\} \\
        c_1=c_2=0,c_3=1,\,\text{for terms } \bR_{tt,m} \text{ and }\expec\{s_u(i)s_v(i)y_n^2(i)\} \\
        c_1=c_2=c_3=0,\,\text{for term }\expec\{s_b(i)y_n(i)\}
    \end{cases}\,.
\end{equation}
Note that $h_1,h_2,\ldots,h_8$ are different indexes of~$\widetilde{\by}(i)$ and $\bx_4$ (not necessarily distinct), and the binary variables $\iota_i\in\{0,1\}$, for $i=1,\ldots,4$, allow us to accommodate lower-order cases and be more flexible.

We now have, for $\widetilde{\by}(i)$ jointly Gaussian distributed:
\begin{align}
    & \expec\left\{f(\widetilde{\by}) \triangleq
    (\widetilde{y}_{h_1}-x_{h_2})^{\iota_1} 
    (\widetilde{y}_{h_3}-x_{h_4})^{\iota_2} 
    (\widetilde{y}_{h_5}-x_{h_6})^{\iota_3} 
    (\widetilde{y}_{h_7}-x_{h_8})^{\iota_4}
    \exp\left( -\dfrac{1}{2}c_4 \widetilde{\by}^\top\bB_0\widetilde{\by} - \bx_4^\top \bB_4^\top \widetilde{\by} + \dfrac{1}{2}\bx_4^\top\bQ_4\bx_4 \right) \right\} \nonumber
    \\
    ={} & (2\pi)^{-\dfrac{N+1}{2}}\det\{\bR_{\widetilde{y}}\}^{-\dfrac{1}{2}} \exp\left(\dfrac{1}{2}\bx_4^\top\bQ_4\bx_4\right) \int_{\mathbb{R}}\cdots\int_{\mathbb{R}}
    (\widetilde{y}_{h_1}-x_{h_2})^{\iota_1}
    (\widetilde{y}_{h_3}-x_{h_4})^{\iota_2}
    (\widetilde{y}_{h_5}-x_{h_6})^{\iota_3}
    (\widetilde{y}_{h_7}-x_{h_8})^{\iota_4}
    \nonumber \\ & 
    \qquad \times \exp\left(-\dfrac{1}{2}c_4 \widetilde{\by}^\top\bB_0\widetilde{\by} \right)
    \exp\left(-\frac{1}{2} \widetilde{\by}^\top\bR_{\widetilde{y}}^{-1}\widetilde{\by} \right) \exp\left(-\bx_4^\top\bB_4^\top\widetilde{\by}\right) d\widetilde{y}_1\cdots d\widetilde{y}_{N+1}
    \\
    ={} & (2\pi)^{-\dfrac{N+1}{2}}\det\{\bR_{\widetilde{y}}\}^{-\dfrac{1}{2}} \exp\left(\dfrac{1}{2}\bx_4^\top\bQ_4\bx_4\right) \int_{\mathbb{R}}\cdots\int_{\mathbb{R}} (\widetilde{y}_{h_1}-x_{h_2})^{\iota_1} 
    (\widetilde{y}_{h_3}-x_{h_4})^{\iota_2} 
    (\widetilde{y}_{h_5}-x_{h_6})^{\iota_3}
    (\widetilde{y}_{h_7}-x_{h_8})^{\iota_4}
    \nonumber \\ & 
    \qquad \times \exp\left(-\frac{1}{2} \widetilde{\by}^\top\left[c_4\bB_0+\bR_{\widetilde{y}}^{-1}\right]\widetilde{\by} - \bx_4^\top\bB_4^\top\widetilde{\by} \right) d\widetilde{y}_1\cdots d\widetilde{y}_{N+1}\,.
    \label{eq:orig_integral}
\end{align}

Note that we can write:
\begin{align}
    & \hspace{-1em} -\frac{1}{2} \widetilde{\by}^\top\left[c_4\bB_0+\bR_{\widetilde{y}}^{-1}\right]\widetilde{\by} - \bx_4^\top\bB_4^\top\widetilde{\by}
    \nonumber \\ {}={} &
    -\frac{1}{2} \left( \widetilde{\by}^\top\left[c_4\bB_0+\bR_{\widetilde{y}}^{-1}\right]\widetilde{\by} + \bx_4^\top\bB_4^\top\widetilde{\by} + \widetilde{\by}^\top\bB_4\bx_4\right)
    \nonumber \\ {}={} &
    -\frac{1}{2} \left(\widetilde{\by} + \left[c_4\bB_0+\bR_{\widetilde{y}}^{-1}\right]^{-1}\bB_4\bx_4\right)^\top \left[c_4\bB_0+\bR_{\widetilde{y}}^{-1}\right] \left(\widetilde{\by} + \left[c_4\bB_0+\bR_{\widetilde{y}}^{-1}\right]^{-1}\bB_4\bx_4\right)
    \nonumber \\ &
    +\frac{1}{2} \bx_4^\top\bB_4^\top\left[c_4\bB_0+\bR_{\widetilde{y}}^{-1}\right]^{-1}\bB_4\bx_4\,.
\end{align}
Thus, relation \eqref{eq:orig_integral} becomes:
\begin{align}
    ={} & (2\pi)^{-\dfrac{N+1}{2}}\det\{\bR_{\widetilde{y}}\}^{-\dfrac{1}{2}} \exp\left(\dfrac{1}{2}\bx_4^\top\bQ_4\bx_4\right) \int_{\mathbb{R}}\cdots\int_{\mathbb{R}} (\widetilde{y}_{h_1}-x_{h_2})^{\iota_1} 
    (\widetilde{y}_{h_3}-x_{h_4})^{\iota_2} 
    (\widetilde{y}_{h_5}-x_{h_6})^{\iota_3} 
    (\widetilde{y}_{h_7}-x_{h_8})^{\iota_4}
    \nonumber \\ & 
    \qquad \times \exp\left(-\frac{1}{2} \left(\widetilde{\by} + \left[c_4\bB_0+\bR_{\widetilde{y}}^{-1}\right]^{-1}\bB_4\bx_4\right)^\top \left[c_4\bB_0+\bR_{\widetilde{y}}^{-1}\right] \left(\widetilde{\by} + \left[c_4\bB_0+\bR_{\widetilde{y}}^{-1}\right]^{-1}\bB_4\bx_4\right) \right) 
    \nonumber \\ & \qquad 
    \times \exp\left(\frac{1}{2} \bx_4^\top\bB_4^\top\left[c_4\bB_0+\bR_{\widetilde{y}}^{-1}\right]^{-1}\bB_4\bx_4\right) d\widetilde{y}_1\cdots d\widetilde{y}_{N+1}
    \nonumber \\ 
    ={} & (2\pi)^{-\dfrac{N+1}{2}}\det\{\bR_{\widetilde{y}}\}^{-\dfrac{1}{2}} \exp\left(\dfrac{1}{2}\bx_4^\top\bQ_4\bx_4\right) \det\left\{\left[c_4\bB_0+\bR_{\widetilde{y}}^{-1}\right]^{-1}\right\}^{\dfrac{1}{2}} 
    \det\left\{\left[c_4\bB_0+\bR_{\widetilde{y}}^{-1}\right]^{-1}\right\}^{-\dfrac{1}{2}}
    \nonumber \\ & 
    \qquad \times \exp\left(\frac{1}{2} \bx_4^\top\bB_4^\top\left[c_4\bB_0+\bR_{\widetilde{y}}^{-1}\right]^{-1}\bB_4\bx_4\right)
    % \nonumber \\ & 
    % \qquad 
    \int_{\mathbb{R}}\cdots\int_{\mathbb{R}} 
    (\widetilde{y}_{h_1}-x_{h_2})^{\iota_1} 
    (\widetilde{y}_{h_3}-x_{h_4})^{\iota_2} 
    (\widetilde{y}_{h_5}-x_{h_6})^{\iota_3} 
    (\widetilde{y}_{h_7}-x_{h_8})^{\iota_4}
    \nonumber \\ & 
    \qquad \times \exp\left(-\frac{1}{2} \left(\widetilde{\by} + \left[c_4\bB_0+\bR_{\widetilde{y}}^{-1}\right]^{-1}\bB_4\bx_4\right)^\top \left[c_4\bB_0+\bR_{\widetilde{y}}^{-1}\right] \left(\widetilde{\by} + \left[c_4\bB_0+\bR_{\widetilde{y}}^{-1}\right]^{-1}\bB_4\bx_4\right) \right) d\widetilde{y}_1\cdots d\widetilde{y}_{N+1}
    \nonumber \\ 
    ={} & \det\{\bR_{\widetilde{y}}\}^{-\dfrac{1}{2}}
    \det\left\{\left[c_4\bB_0+\bR_{\widetilde{y}}^{-1}\right]^{-1}\right\}^{\dfrac{1}{2}} \exp\left(\frac{1}{2} \bx_4^\top\left(\bQ_4+\bB_4^\top\left[c_4\bB_0+\bR_{\widetilde{y}}^{-1}\right]^{-1}\bB_4\right)\bx_4\right)
    \nonumber \\ & 
    \qquad \times \expec_{p'(\widetilde{\by})} \Big\{  
    (\widetilde{y}_{h_1}-x_{h_2})^{\iota_1} 
    (\widetilde{y}_{h_3}-x_{h_4})^{\iota_2} 
    (\widetilde{y}_{h_5}-x_{h_6})^{\iota_3} 
    (\widetilde{y}_{h_7}-x_{h_8})^{\iota_4}\Big\} 
    \label{eq:expec_suppl} \\ 
    {}={} & \nu(\{h_i\}_{i=1}^8)\,,
\end{align}
with the normal distribution:
\begin{align}
    p'(\widetilde{\by}) = \mathcal{N}\left( \bmu =  -\left[c_4\bB_0+\bR_{\widetilde{y}}^{-1}\right]^{-1}\bB_4\bx_4, \, \bSig = \left[c_4\bB_0+\bR_{\widetilde{y}}^{-1}\right]^{-1} \right)\,.
\end{align}

Let us focus on the expectation in \eqref{eq:expec_suppl}. We have, for $\iota_i=1, i = \{1,2,3,4\}$:
\begin{align}
    & \hspace{-1em} \expec_{p'(\by)} \Big\{  (\widetilde{y}_{h_1}-x_{h_2})^{\iota_1} (\widetilde{y}_{h_3}-x_{h_4})^{\iota_2} (\widetilde{y}_{h_5}-x_{h_6})^{\iota_3} (\widetilde{y}_{h_7}-x_{h_8})^{\iota_4}\Big\} \nonumber \\ = {} & \expec\big\{
    \widetilde{y}_{h_1}\widetilde{y}_{h_3}\widetilde{y}_{h_5}\widetilde{y}_{h_7}\big\}
    -x_{h_8}\expec\big\{\widetilde{y}_{h_1}\widetilde{y}_{h_3}\widetilde{y}_{h_5}\big\}
    -x_{h_6}\expec\big\{\widetilde{y}_{h_1}\widetilde{y}_{h_3}\widetilde{y}_{h_7}\big\}
    +x_{h_6}x_{h_8}\expec\big\{\widetilde{y}_{h_1}\widetilde{y}_{h_3}\big\}
    \nonumber \\  & %
    -x_{h_4}\expec\big\{\widetilde{y}_{h_1}\widetilde{y}_{h_5}\widetilde{y}_{h_7}\big\}
    +x_{h_4}x_{h_8}\expec\big\{\widetilde{y}_{h_1}\widetilde{y}_{h_5}\big\}
    +x_{h_4}x_{h_6}\expec\big\{\widetilde{y}_{h_1}\widetilde{y}_{h_7}\big\}
    -x_{h_4}x_{h_6}x_{h_8}\expec\big\{\widetilde{y}_{h_1}\big\}
    \nonumber \\ & %
    -x_{h_2}\expec\big\{\widetilde{y}_{h_3}\widetilde{y}_{h_5}\widetilde{y}_{h_7}\big\}
    +x_{h_2}x_{h_8}\expec\big\{\widetilde{y}_{h_3}\widetilde{y}_{h_5}\big\}
    +x_{h_2}x_{h_6}\expec\big\{\widetilde{y}_{h_3}\widetilde{y}_{h_7}\big\}
    -x_{h_2}x_{h_6}x_{h_8}\expec\big\{\widetilde{y}_{h_3}\big\}
    \nonumber \\ & %
    +x_{h_2}x_{h_4}\expec\big\{\widetilde{y}_{h_5}\widetilde{y}_{h_7}\big\}
    -x_{h_2}x_{h_4}x_{h_8}\expec\big\{\widetilde{y}_{h_5}\big\}
    -x_{h_2}x_{h_4}x_{h_6}\expec\big\{\widetilde{y}_{h_7}\big\}
    +x_{h_2}x_{h_4}x_{h_6}x_{h_8}\,.
\end{align}
Since $h_i, i\in \{1,3,5,7\}$ represent solely generic placeholders for any actual index, it suffices to compute only a few expectations from the previous relation, which can then be used for any other combination.
We have:
\begin{align}
    \expec_{p'(\widetilde{\by})} \Big\{ \widetilde{y}_{h_1} \widetilde{y}_{h_3} \widetilde{y}_{h_5} \widetilde{y}_{h_7} \Big\}
    {}={} & \mu_{h_1}\mu_{h_3}\mu_{h_5}\mu_{h_7} + 
    \Sigma_{h_1,h_3}\Sigma_{h_5,h_7}
    + \Sigma_{h_1,h_5}\Sigma_{h_3,h_7}
    + \Sigma_{h_1,h_7}\Sigma_{h_3,h_5}
    \nonumber \\ & + 
    \mu_{h_1}\mu_{h_3}\Sigma_{h_5,h_7} + \mu_{h_1}\mu_{h_5}\Sigma_{h_3,h_7} + \mu_{h_1}\mu_{h_7}\Sigma_{h_3,h_5}
    \nonumber \\ & +
    \mu_{h_3}\mu_{h_5}\Sigma_{h_1,h_7} + \mu_{h_3}\mu_{h_7}\Sigma_{h_1,h_5} + \mu_{h_5}\mu_{h_7}\Sigma_{h_1,h_3}\,,
% \end{align}
% \begin{gather}
\\
    \expec_{p'(\by)} \Big\{ \widetilde{y}_{h_1} \widetilde{y}_{h_3} \widetilde{y}_{h_5} \Big\}
    {}={} & \mu_{h_1}\mu_{h_3}\mu_{h_5} + 
    \mu_{h_5}\Sigma_{h_1,h_3} + \mu_{h_3}\Sigma_{h_1,h_5} + \mu_{h_1}\Sigma_{h_3,h_5}\,,
% \end{gather}
% \begin{gather}
\\ 
    \expec_{p'(\by)} \Big\{ \widetilde{y}_{h_1} \widetilde{y}_{h_3} \Big\}
    {}={} & \mu_{h_1}\mu_{h_3} + 
    \Sigma_{h_1,h_3}\,,
% \end{gather}
% \begin{gather}
\\
    \expec_{p'(\by)}\Big\{ \widetilde{y}_{h_1}\Big\}
    {}={} & \mu_{h_1}\,.
\end{align}

% \newpage
\section{Cases corresponding to \texorpdfstring{$  \bR_{tt,m}$ ($c_1=c_2=0,c_3=1$)}{Rtt,m = E\{Tm\} (c1=c2=0,c3=1)}}
\label{appendix:Rttm}

Since $\bt_m(p) = \begin{bmatrix} \bl_m(p) \\ \bzeta_m(p) \end{bmatrix}$, the matrix $\bR_{tt,m}$ can be written as:
\begin{align}
    \bR_{tt,m} & = \expec\{\bt_m(i)\bt_m^\top\!(i)\}
    \nonumber\\ &
    = \expec\left\{ \begin{bmatrix} \bl_m(i) \\ \bzeta_m(i) \end{bmatrix}
    \begin{bmatrix} \bl_m^\top\!(i) & \bzeta_m^\top\!(i) \end{bmatrix}\right\}
    \nonumber\\ &
    = \begin{bmatrix} \bR_{\ell\ell,m} &  \bR_{\zeta\ell,m}^\top \\ \bR_{\zeta\ell,m} &  \bR_{\zeta\zeta,m}
    \end{bmatrix}\,,
\end{align}
where in the last step we used the fact that the signals are identically distributed (for different~$i$).

For ease of comprehension, we recall the definitions of $\bz(i)$, $\bzeta(i)$ and $\bl_m(i)$:
\begin{align*}
\begin{array}{ll}
   \bz(i) = \big[\bz_1^\top\!(i),\ldots, \bz_{N}^\top\!(i)\big]^\top\,, \hspace{1cm}
   &
   [\bz_m(i)]_{q} = \dfrac{\partial \kappa(\by(i),\by(\omega_q))}{\partial y_{m}(\omega_q)}\Bigg\vert_{q=1,\ldots,\vert \cD_n\vert}\,, \\
    \bzeta(i) = \big[\bzeta_1^\top\!(i),\ldots,\bzeta_N^\top\!(i)\big]^\top\,,
    &
    [\bzeta_m(i)]_{q} = \dfrac{\partial \kappa(\by(i),\by(\omega_q))}{\partial y_{m}(i)}\Bigg\vert_{q=1,\ldots,\vert \cD_n \vert}\,, \\
    \bl_m(i) = \big[\bl_{1,m}^\top\!(i), \ldots, \bl_{N,m}^\top\!(i)\big]^\top\,,
    &
     [\bl_{m_1,m_2}(i)]_q = \dfrac{\partial^2 \kappa (\by(i),\by(\omega_q))}{\partial y_{m_1}(\omega_q) \partial y_{m_2}(\omega_q)}\Bigg\vert_{q=1,\ldots,\vert \cD_n \vert}\,.
\end{array}
\end{align*}
% \begin{gather*}
%     \bz(i) = \big[\bz_1^\top\!(i),\ldots, \bz_{N}^\top\!(i)\big]^\top\,,\quad [\bz_m(i)]_{q} = \dfrac{\partial \kappa(\by(i),\by(\omega_q))}{\partial y_{m}(\omega_q)}\Bigg\vert_{q=1,\ldots,\vert \cD_n\vert}\,, \\
%     \bzeta(i) = \big[\bzeta_1^\top\!(i),\ldots,\bzeta_N^\top\!(i)\big]^\top\,
%     ,\quad[\bzeta_m(i)]_{q} = \dfrac{\partial \kappa(\by(i),\by(\omega_q))}{\partial y_{m}(i)}\Bigg\vert_{q=1,\ldots,\vert \cD_n \vert}\,, \\
%      \bl_m(i) = \big[\bl_{1,m}^\top\!(i), \ldots, \bl_{N,m}^\top\!(i)\big]^\top\,,\quad 
%      [\bl_{m_1,m_2}(i)]_q = \dfrac{\partial^2 \kappa (\by(i),\by(\omega_q))}{\partial y_{m_1}(\omega_q) \partial y_{m_2}(\omega_q)}\Bigg\vert_{q=1,\ldots,\vert \cD_n \vert}\,.
% \end{gather*}
Their explicit forms for the particular case of the Gaussian kernel are:
\begin{align*}
    [\bz_m(i)]_q &= -[\bzeta_m(i)]_q 
    \\
    & = \exp \left( -\dfrac{1}{2\sigma^2} \Vert \by(i) - \by(\omega_q) \Vert^2 \right)\left(\dfrac{[\by(i)]_m - [\by(\omega_q)]_m}{\sigma^2} \right)\,,
% \end{equation*} 
% \begin{equation*}
\\
    [\bl_{m_1,m_2}]_q & = \begin{cases}
    -\exp \left( -\dfrac{1}{2\sigma^2} \Vert \by(i) - \by(\omega_q) \Vert^2 \right) \dfrac{[\by(i)]_{m_1} - [\by(\omega_q)]_{m_1}}{\sigma^2} \dfrac{[\by(i)]_{m_2} - [\by(\omega_q)]_{m_2}}{\sigma^2} & m_1 \neq m_2  \\
    -\exp \left( -\dfrac{1}{2\sigma^2} \Vert \by(i) - \by(\omega_q) \Vert^2 \right) \left( \dfrac{\big([\by(i)]_{m_1} - [\by(\omega_q)]_{m_1} \big)^2}{\sigma^4} - \dfrac{1}{\sigma^2} \right) & m_1 = m_2
\end{cases}\,.
\end{align*}

\bigskip
\subsection{Computing \texorpdfstring{$\bR_{\ell\ell,m}$}{Rllm}}

\begin{align}
    \big[\bR_{\ell\ell,m} & \big]_{(a-1)|\cD_n|+p,\,(b-1)|\cD_n|+q}
     \nonumber \\[0.2cm] & 
     = \big[\expec\{\bl_m(i)\bl_m^\top\!(i)\}\big]_{(a-1)|\cD_n|+p,\,(b-1)|\cD_n|+q} 
    \nonumber \\[0.2cm]  & 
     = \expec\big\{[\bl_{a,m}]_p [\bl_{b,m}]_q\big\} \nonumber\\[0.2cm]
    & =
    \begin{cases}
        \begin{array}{c}\expec\bigg\{ 
        \dfrac{(y_a(i) - y_a(\omega_p))}{\sigma^2} 
        \dfrac{(y_m(i) - y_m(\omega_p))}{\sigma^2} 
        \dfrac{(y_b(i) - y_b(\omega_q))}{\sigma^2} 
        \dfrac{(y_m(i) - y_m(\omega_q))}{\sigma^2} 
        \\
        \times \exp \left( -\dfrac{1}{2\sigma^2} \Vert \by(i) - \by(\omega_p) \Vert^2  -\dfrac{1}{2\sigma^2} \Vert \by(i) - \by(\omega_q) \Vert^2 \right) 
        \bigg\}
        \end{array}, & a\neq m, \, b\neq m
        \\
        \begin{array}{c}\expec\bigg\{ 
        \dfrac{(y_a(i) - y_a(\omega_p))}{\sigma^2} 
        \dfrac{(y_m(i) - y_m(\omega_p))}{\sigma^2} 
        \left( \dfrac{\big(y_b(i) - y_b(\omega_q) \big)^2}{\sigma^4} - \dfrac{1}{\sigma^2} \right)
        \\
        \times \exp \left( -\dfrac{1}{2\sigma^2} \Vert \by(i) - \by(\omega_p) \Vert^2  -\dfrac{1}{2\sigma^2} \Vert \by(i) - \by(\omega_q) \Vert^2 \right) 
        \bigg\}
        \end{array}, & a\neq m, \, b= m
        \\
        \begin{array}{c}\expec\bigg\{ 
        \dfrac{(y_b(i) - y_b(\omega_q))}{\sigma^2} 
        \dfrac{(y_m(i) - y_m(\omega_q))}{\sigma^2} 
        \left( \dfrac{\big(y_a(i) - y_a(\omega_p) \big)^2}{\sigma^4} - \dfrac{1}{\sigma^2} \right)
        \\
        \times \exp \left( -\dfrac{1}{2\sigma^2} \Vert \by(i) - \by(\omega_p) \Vert^2  -\dfrac{1}{2\sigma^2} \Vert \by(i) - \by(\omega_q) \Vert^2 \right) 
        \bigg\}
        \end{array}, & a= m, \, b\neq m
        \\
        \begin{array}{c}\expec\bigg\{ 
        \left( \dfrac{\big(y_a(i) - y_a(\omega_p) \big)^2}{\sigma^4} - \dfrac{1}{\sigma^2} \right) 
        \left( \dfrac{\big(y_b(i) - y_b(\omega_q) \big)^2}{\sigma^4} - \dfrac{1}{\sigma^2} \right)
        \\
        \times \exp \left( -\dfrac{1}{2\sigma^2} \Vert \by(i) - \by(\omega_p) \Vert^2  -\dfrac{1}{2\sigma^2} \Vert \by(i) - \by(\omega_q) \Vert^2 \right) 
        \bigg\}
        \end{array}, & a= m, \, b= m
    \end{cases}\,,
    \label{eq:expec_Rllm_cases}
\end{align}
which is equivalent to:
\begin{align}
\big[\bR_{\ell\ell,m} &\big]_{(a-1)|\cD_n|+p,\,(b-1)|\cD_n|+q} 
\nonumber\\
&  = \begin{cases}
    \dfrac{1}{\sigma^8} \nu(\{h_i\}_{i=1}^8),\, \text{with }
    \begin{cases} h_1=a \\ h_2=a \\ h_3=m \\ h_4=m \\ h_5=b \\ h_6=N+b \\ h_7=m \\ h_8=N+m
    \end{cases}, & \text{when } a\neq m,\, b\neq m
    \\
    \begin{array}{l}
    \dfrac{1}{\sigma^8} \nu(\{h_i\}_{i=1}^8) \\
    -\dfrac{1}{\sigma^2}\big[\bR_{zz}\big]_{(a-1)|\cD_n|+p,(m-1)|\cD_n|+p}
    \end{array}
    ,\, \text{with } \begin{cases} h_1=a \\ h_2=a \\ h_3=b \\ h_4=b \\ h_5=b \\ h_6=N+b \\ h_7=b \\ h_8=N+b
    \end{cases}, & \text{when } a\neq m,\, b=m
    \\
    \begin{array}{l}
    \dfrac{1}{\sigma^8} \nu(\{h_i\}_{i=1}^8) \\
    -\dfrac{1}{\sigma^2}\big[\bR_{zz}\big]_{(b-1)|\cD_n|+q,(m-1)|\cD_n|+q}
    \end{array}
    ,\, \text{with } \begin{cases} h_1=b \\ h_2=b \\ h_3=a \\ h_4=a \\ h_5=a \\ h_6=N+a \\ h_7=a \\ h_8=N+a
    \end{cases}, & \text{when } a=m,\, b\neq m
    \\
    \begin{array}{l}
    \dfrac{1}{\sigma^8} \nu(\{h_i\}_{i=1}^8) \\
    -\dfrac{1}{\sigma^2}\big[\bR_{zz}\big]_{(a-1)|\cD_n|+p,(a-1)|\cD_n|+p} \\
    -\dfrac{1}{\sigma^2}\big[\bR_{zz}\big]_{(b-1)|\cD_n|+q,(b-1)|\cD_n|+q}\\
    +\dfrac{1}{\sigma^4}\big[\bR_{kk}\big]_{p,q}
    \end{array}
    ,\, \text{with } \begin{cases} h_1=a \\ h_2=a \\ h_3=a \\ h_4=a \\ h_5=b \\ h_6=N+b \\ h_7=b \\ h_8=N+b
    \end{cases}, & \text{when } a=m,\, b=m
\end{cases}\,.
\end{align}

\subsection{Computing \texorpdfstring{$\bR_{\zeta\ell,m}$}{Rzlm}}
\begin{align}
    \big[\bR_{\zeta\ell,m} & \big]_{p,\,(b-1)|\cD_n|+q} 
         \nonumber \\[0.2cm] & 
     = \big[\expec\{\bzeta_m(i)\bl_m^\top\!(i)\}\big]_{p,\,(b-1)|\cD_n|+q} 
          \nonumber \\[0.2cm] & 
     = \expec\big\{[\bzeta_m]_p [\bl_{b,m}]_q\big\}
    \nonumber\\[0.2cm]
    & = 
    \begin{cases}
        \begin{array}{l} - \expec\bigg\{ 
        \dfrac{(y_m(i) - y_m(\omega_p))}{\sigma^2} 
        \dfrac{(y_b(i) - y_b(\omega_q))}{\sigma^2} 
        \dfrac{(y_m(i) - y_m(\omega_q))}{\sigma^2} 
        \\
        \times \exp \left( -\dfrac{1}{2\sigma^2} \Vert \by(i) - \by(\omega_p) \Vert^2  -\dfrac{1}{2\sigma^2} \Vert \by(i) - \by(\omega_q) \Vert^2 \right) 
        \bigg\}
        \end{array}, & b\neq m
        \\
        \begin{array}{l} - \expec\bigg\{ 
        \dfrac{(y_b(i) - y_b(\omega_p))}{\sigma^2} 
        \left( \dfrac{\big(y_b(i) - y_b(\omega_q) \big)^2}{\sigma^4} - \dfrac{1}{\sigma^2} \right)
        \\
        \times \exp \left( -\dfrac{1}{2\sigma^2} \Vert \by(i) - \by(\omega_p) \Vert^2  -\dfrac{1}{2\sigma^2} \Vert \by(i) - \by(\omega_q) \Vert^2 \right) 
        \bigg\}
        \end{array}, & b= m
    \end{cases}\,.
    \label{eq:expec_Rzlm_cases}
\end{align}
which becomes:
\begin{gather}
\big[\bR_{\zeta\ell,m}\big]_{p,\,(b-1)|\cD_n|+q} =
\begin{cases}
    - \dfrac{1}{\sigma^6} \nu(\{h_i\}_{i=1}^8),\, \text{with }
    \begin{cases} 
    \begin{rcases}h_1=\bullet \\ h_2=\bullet\end{rcases} \iota_1=0 \\ h_3=m \\ h_4=m \\ h_5=b \\ h_6=N+b \\ h_7=m \\ h_8=N+m
    \end{cases}, & \text{when } b\neq m
    \\
    \begin{array}{l}
    - \dfrac{1}{\sigma^6} \nu(\{h_i\}_{i=1}^8) \\
    + \dfrac{1}{\sigma^2}\big[\bR_{kz}\big]_{q,(b-1)|\cD_n|+p}
    \end{array}
    ,\, \text{with } \begin{cases} \begin{rcases} h_1=\bullet \\ h_2=\bullet\end{rcases} \iota_1=0 \\ h_3=b \\ h_4=b \\ h_5=b \\ h_6=N+b \\ h_7=b \\ h_8=N+b
    \end{cases}, & \text{when } b=m
\end{cases}\,,
\end{gather}
where the bullet $\bullet$ means that the index is irrelevant, due to its corresponding term not being present in the generic form \eqref{eq:generic_form}.

\subsection{Computing \texorpdfstring{$\bR_{ss}$ (including $\bR_{zz,m}$, $\bR_{zz,m_1,m_2}$, $\bR_{kz,m}$ and $\bR_{kk,m}$)}{Rss (including Rzzm, Rzzm1m2, Rkzm and Rkkm)}} \label{subsec:Rss}

Since $\bs(i) = \begin{bmatrix}\bz(i)\\ \bk(i)\end{bmatrix}$, all the matrices detailed in this section can be used in computing $\bR_{ss}$.

\begin{align}
    \big[\bR_{\zeta\zeta,m}\big]_{p,\,q} &
     = \big[\expec\{\bzeta_m(i)\bzeta_m^\top\!(i)\}\big]_{p,\,q} 
     \nonumber\\[0.2cm] &
     = \expec\big\{[\bzeta_m]_p [\bzeta_m]_q\big\}
    \nonumber\\[0.2cm]
    & = 
        \begin{array}{c}\expec\bigg\{ 
        \dfrac{(y_m(i) - y_m(\omega_p))}{\sigma^2} 
        \dfrac{(y_m(i) - y_m(\omega_q))}{\sigma^2} 
        \exp \left( -\dfrac{1}{2\sigma^2} \Vert \by(i) - \by(\omega_p) \Vert^2  -\dfrac{1}{2\sigma^2} \Vert \by(i) - \by(\omega_q) \Vert^2 \right) 
        \bigg\}
        \end{array}\,.
    \label{eq:expec_Rzzm_cases}
% \end{align}
% \begin{align}
% \big[\bR_{\zeta\zeta,m} \big]_{p,\,q} 
\\ &
    = \dfrac{1}{\sigma^4} \nu(\{h_i\}_{i=1}^8),\, \text{with }
    \begin{cases} 
    \begin{rcases}h_1=\bullet \\ h_2=\bullet\end{rcases} \iota_1=0 \\ h_3=m \\ h_4=m \\ \begin{rcases}h_5=\bullet \\ h_6=\bullet\end{rcases} \iota_3=0 \\ h_7=m \\ h_8=N+m
    \end{cases}\,.
\end{align}

Generalizing to the case $m_1\neq m_2$, we straightforwardly have:
\begin{align}
    \big[\bR_{\zeta\zeta,m_1,m_2}\big]_{p,\,q} &
     = \big[\expec\{\bzeta_{m_1}(i)\bzeta_{m_2}^\top\!(i)\}\big]_{p,\,q} 
     \nonumber\\[0.2cm] &
     = \expec\big\{[\bzeta_{m_1}]_p [\bzeta_{m_2}]_q\big\}
    \nonumber\\[0.2cm]
    & = 
        \begin{array}{c}\expec\bigg\{ 
        \dfrac{(y_{m_1}(i) - y_{m_1}(\omega_p))}{\sigma^2} 
        \dfrac{(y_{m_2}(i) - y_{m_2}(\omega_q))}{\sigma^2} 
        \exp \left( -\dfrac{1}{2\sigma^2} \Vert \by(i) - \by(\omega_p) \Vert^2  -\dfrac{1}{2\sigma^2} \Vert \by(i) - \by(\omega_q) \Vert^2 \right) 
        \bigg\}
        \end{array}\,.
    \label{eq:expec_Rzzm_1m_2_cases}
% \end{align}
% \begin{align}
% \big[\bR_{\zeta\zeta,m_1,m_2} \big]_{p,\,q} 
\\ &
    = \dfrac{1}{\sigma^4} \nu(\{h_i\}_{i=1}^8),\, \text{with }
    \begin{cases} 
    \begin{rcases}h_1=\bullet \\ h_2=\bullet\end{rcases} \iota_1=0 \\ h_3=m_1 \\ h_4=m_1 \\ \begin{rcases}h_5=\bullet \\ h_6=\bullet\end{rcases} \iota_3=0 \\ h_7=m_2 \\ h_8=N+m_2
    \end{cases}\,.
\end{align}

Particularizing to $\bR_{kz,m}$ and using the fact that for the Gaussian kernel $[\bz_m]_q = - [\bzeta_m]_q$, we have:
\begin{align}
    \big[\bR_{kz,m}\big]_{p,\,q} &
     = - \big[\expec\{\bk(i)\bzeta_m^\top\!(i)\}\big]_{p,\,q}
     \nonumber\\[0.2cm] &
     = \expec\big\{[\bk]_p [\bz_m]_q\big\}
    \nonumber\\[0.2cm]
    & = 
        % \begin{array}{c}
        \expec\bigg\{
        \dfrac{(y_{m_2}(i) - y_{m_2}(\omega_q))}{\sigma^2} 
        \exp \left( -\dfrac{1}{2\sigma^2} \Vert \by(i) - \by(\omega_p) \Vert^2  -\dfrac{1}{2\sigma^2} \Vert \by(i) - \by(\omega_q) \Vert^2 \right) 
        \bigg\}
        % \end{array}
        \,.
    \label{eq:expec_Rkzm_cases}
% \end{align}
% \begin{align}
% \big[\bR_{kz,m} \big]_{p,\,q} 
\\ &
    = \dfrac{1}{\sigma^2} \nu(\{h_i\}_{i=1}^8),\, \text{with }
    \begin{cases} 
    \begin{rcases}h_1=\bullet \\ h_2=\bullet\end{rcases} \iota_1=0 \\ \begin{rcases}h_3=\bullet \\ h_4=\bullet \end{rcases} \iota_2=0\\ \begin{rcases}h_5=\bullet \\ h_6=\bullet\end{rcases} \iota_3=0 \\ h_7=m \\ h_8=N+m
    \end{cases}\,.
\end{align}

Particularizing to $\bR_{kk}$, we have:
\begin{align}
    \big[\bR_{kk}\big]_{p,\,q} &
     = \big[\expec\{\bk(i)\bk^\top\!(i)\}\big]_{p,\,q}
     \nonumber\\[0.2cm] &
     = \expec\big\{[\bk]_p [\bk]_q\big\}
    \nonumber\\[0.2cm]
    & = 
        \begin{array}{c}\expec\bigg\{ 
        \exp \left( -\dfrac{1}{2\sigma^2} \Vert \by(i) - \by(\omega_p) \Vert^2  -\dfrac{1}{2\sigma^2} \Vert \by(i) - \by(\omega_q) \Vert^2 \right) 
        \bigg\}
        \end{array}\,.
    \label{eq:expec_Rkk_cases}
% \end{align}
% \begin{align}
% \big[\bR_{kk} \big]_{p,\,q} 
\\ & 
    = \nu(\{h_i\}_{i=1}^8),\, \text{with }
    \begin{cases} 
    \begin{rcases}h_1=\bullet \\ h_2=\bullet\end{rcases} \iota_1=0 \\ \begin{rcases}h_3=\bullet \\ h_4=\bullet \end{rcases} \iota_2=0\\ \begin{rcases}h_5=\bullet \\ h_6=\bullet\end{rcases} \iota_3=0 \\ \begin{rcases}h_7=\bullet \\ h_8=\bullet\end{rcases} \iota_4=0
    \end{cases}\,.
\end{align}

%=========================================================================

\section{Cases corresponding to \texorpdfstring{$\expec\{s_u(i) s_{v}(i) s_a(i) s_b(i)\}$ ($c_1=c_2=c_3=1$)}{E\{su(i)sv(i)sa(i)sb(i)\} (c1=c2=c3=1)}} \label{subsec:suxyv}

Recall that $\bs(i) = \begin{bmatrix}\bz(i)\\ \bk(i)\end{bmatrix}$. We have the following terms: \

\paragraph{Term $\expec\{z_{u}(i)z_{a}(i)z_{b}(i)z_{v}(i)\}$} $u,v = 1,\ldots,N|\cD_n|$, $a,b = 1,\ldots,N|\cD_n|$
\begin{align}
    \expec\{z_{u}(i)& z_{a}(i)z_{a}(i)z_{v}(i)\} \nonumber\\ {}={} &
    % \begin{array}{c}
    \expec\bigg\{ 
        \dfrac{(y_{m_1}(i) - y_{m_1}(\omega_p))}{\sigma^2} 
        \dfrac{(y_{m_2}(i) - y_{m_2}(\omega_q))}{\sigma^2} 
        \dfrac{(y_{m_3}(i) - y_{m_3}(\omega_r))}{\sigma^2} 
        \dfrac{(y_{m_4}(i) - y_{m_4}(\omega_s))}{\sigma^2} 
        \nonumber\\ &
        \times \exp \left( -\dfrac{1}{2\sigma^2} \left(\Vert \by(i) - \by(\omega_p) \Vert^2 + \Vert \by(i) - \by(\omega_q) \Vert^2 + \Vert \by(i) - \by(\omega_r) \Vert^2 + \Vert \by(i) - \by(\omega_s) \Vert^2\right)\right)
        \bigg\}
        % \end{array} 
        \label{eq:zzzy1}\\ {}={} &
        \dfrac{1}{\sigma^8} \nu(\{h_i\}_{i=1}^8),\, \text{with }
    \begin{cases} 
    h_1=h_2=m_1=\left\lceil \frac{u}{|\cD_n|} \right\rceil \\ h_3=h_4-N=m_2=\left\lceil \frac{a}{|\cD_n|} \right\rceil \\ h_5=h_6-2N=m_3=\left\lceil \frac{b}{|\cD_n|} \right\rceil \\ h_7=h_8-3N=m_4=\left\lceil \frac{v}{|\cD_n|} \right\rceil
    \end{cases}\,,
    \begin{cases} 
    \omega_p=\modop(u-1,|\cD_n|)+1 \\ \omega_q=\modop(a-1,|\cD_n|)+1 \\ \omega_r=\modop(b-1,|\cD_n|)+1 \\ \omega_s=\modop(v-1,|\cD_n|)+1
    \end{cases}\,.
    \label{eq:zzzy2}
\end{align}

\paragraph{Term $\expec\{k_{u}(i)z_{a}(i)z_{b}(i)z_{v}(i)\}$} $u = 1,\ldots, |\cD_n|, v = 1,\ldots, N|\cD_n|$, $a,b = 1,\ldots,N|\cD_n|$
\begin{align}
    \expec\{k_{u}(i)z_{a}(i)z_{b}(i)z_{v}(i)\}
    \nonumber%\\ 
    {}={} &
    % \begin{array}{c}
    \expec\bigg\{ 
        \dfrac{(y_{m_2}(i) - y_{m_2}(\omega_q))}{\sigma^2} 
        \dfrac{(y_{m_3}(i) - y_{m_3}(\omega_r))}{\sigma^2} 
        \dfrac{(y_{m_4}(i) - y_{m_4}(\omega_s))}{\sigma^2} 
        \exp \bigg( -\dfrac{1}{2\sigma^2} \Big(\Vert \by(i) - \by(\omega_p) \Vert^2 
        \\ &
        + \Vert \by(i) - \by(\omega_q) \Vert^2 + \Vert \by(i) - \by(\omega_r) \Vert^2 + \Vert \by(i) - \by(\omega_s) \Vert^2\Big)\bigg)
        \bigg\}
        % \end{array} 
        \label{eq:zkzy1}\\ {}={} &
        \dfrac{1}{\sigma^6} \nu(\{h_i\}_{i=1}^8),\, \text{with }
    \begin{cases} 
    h_1=h_2=m_1=\bullet \\ h_3=h_4-N=m_2=\left\lceil \frac{a}{|\cD_n|} \right\rceil \\ h_5=h_6-2N=m_3=\left\lceil \frac{b}{|\cD_n|} \right\rceil \\ h_7=h_8-3N=m_4=\left\lceil \frac{v}{|\cD_n|} \right\rceil 
    \end{cases}\,,
    \begin{cases} 
    \omega_p=u \\ \omega_q=\modop(a-1,|\cD_n|)+1 \\ \omega_r=\modop(b-1,|\cD_n|)+1 \\ \omega_s=\modop(v-1,|\cD_n|)+1
    \end{cases}\,.
    \label{eq:zkzy2}
\end{align}

\paragraph{Term $\expec\{k_{u}(i)k_{a}(i)z_{b}(i)z_{v}(i)\}$} $u = 1,\ldots, |\cD_n|, v = 1,\ldots, N|\cD_n|$, $a = 1,\ldots,|\cD_n|, b = 1,\ldots,N|\cD_n|$
\begin{align}
    \expec\{k_{u}(i)k_{a}(i)z_{b}(i)z_{v}(i)\}
    \nonumber
     {}={} &
    % \begin{array}{c}
    \expec\bigg\{ 
        \dfrac{(y_{m_3}(i) - y_{m_3}(\omega_r))}{\sigma^2} 
        \dfrac{(y_{m_4}(i) - y_{m_4}(\omega_s))}{\sigma^2} 
        \exp \bigg( -\dfrac{1}{2\sigma^2} \Big(\Vert \by(i) - \by(\omega_p) \Vert^2 
        \\ &
        + \Vert \by(i) - \by(\omega_q) \Vert^2 + \Vert \by(i) - \by(\omega_r) \Vert^2 + \Vert \by(i) - \by(\omega_s) \Vert^2\Big)\bigg)
        \bigg\}
        % \end{array} 
        \label{eq:zkky1}\\ {}={} &
        \dfrac{1}{\sigma^4} \nu(\{h_i\}_{i=1}^8),\, \text{with }
    \begin{cases} 
    h_1=h_2=m_1=\bullet \\ h_3=h_4-N=m_2=\bullet \\ h_5=h_6-2N=m_3=\left\lceil \frac{b}{|\cD_n|} \right\rceil \\ h_7=h_8-3N=m_4=\left\lceil \frac{v}{|\cD_n|} \right\rceil
    \end{cases}\,,
    \begin{cases} 
    \omega_p=u \\ \omega_q=a \\ \omega_r=\modop(b-1,|\cD_n|)+1 \\ \omega_s=\modop(v-1,|\cD_n|)+1
    \end{cases}\,.
    \label{eq:zkky2}
\end{align}

\paragraph{Term $\expec\{k_{u}(i)k_{a}(i)k_{b}(i)z_{v}(i)\}$} $u = 1,\ldots, |\cD_n|, v = 1,\ldots, N|\cD_n|$, $a,b = 1,\ldots,|\cD_n|$
\begin{align}
    \expec\{k_{u}(i)k_{a}(i)k_{b}(i)z_{v}(i)\} 
    \nonumber 
    {}={} &
    % \begin{array}{c}
    \expec\bigg\{ 
        \dfrac{(y_{m_4}(i) - y_{m_4}(\omega_s))}{\sigma^2} 
        \exp \bigg( -\dfrac{1}{2\sigma^2} \Big(\Vert \by(i) - \by(\omega_p) \Vert^2
        \\ &
         + \Vert \by(i) - \by(\omega_q) \Vert^2 + \Vert \by(i) - \by(\omega_r) \Vert^2 + \Vert \by(i) - \by(\omega_s) \Vert^2\Big)\bigg)
        \bigg\}
        % \end{array}
        \\ {}={} &
    \dfrac{1}{\sigma^2} \nu(\{h_i\}_{i=1}^8),\, \text{with }
    \begin{cases} 
    h_1=h_2=m_1=\bullet \\ h_3=h_4-N=m_2=\bullet \\ h_5=h_6-2N=m_3=\bullet \\ h_7=h_8-3N=m_4=\left\lceil \frac{v}{|\cD_n|} \right\rceil 
    \end{cases}\,,
    \begin{cases} 
    \omega_p= u \\ \omega_q=a \\ \omega_r=b \\ \omega_s=\modop(v-1,|\cD_n|)+1
    \end{cases}\,.
\end{align}

\paragraph{Term $\expec\{k_{u}(i)k_{a}(i)k_{a}(i)k_{v}(i)\}$} $u,v = 1,\ldots, |\cD_n|$, $a,b = 1,\ldots,|\cD_n|$
\begin{align}
    \expec\{&k_{u}(i)k_{a}(i)k_{b}(i)k_{v}(i)\}\nonumber\\ & = \exp \left( -\dfrac{1}{2\sigma^2} \left(\Vert \by(i) - \by(\omega_p) \Vert^2 + \Vert \by(i) - \by(\omega_q) \Vert^2 + \Vert \by(i) - \by(\omega_r) \Vert^2 + \Vert \by(i) - \by(\omega_s) \Vert^2\right)\right) \bigg\} \label{eq:kkky1}\\ & = \nu(\{h_i\}_{i=1}^8),\, \text{with }
    \begin{cases} 
    h_1=h_2=m_1=\bullet \\ h_3=h_4-N=m_2=\bullet \\ h_5=h_6-2N=m_3=\bullet \\ h_7=h_8-3N=m_4=\bullet
    \end{cases}\,,
    \begin{cases} 
    \omega_p=u \\ \omega_q=a \\ \omega_r=b \\ \omega_s=v
    \end{cases}\,.
    \label{eq:kkky2}
\end{align}

%=========================================================================
%\newpage

\section{Cases corresponding to \texorpdfstring{$\expec\{s_u(i) s_{a}(i) s_b(i) y_n(i)\}$ ($c_1=0,c_2=c_3=1$)}{E\{su(i)sa(i)sb(i)yn(i)\} (c1=0,c2=c3=1)}} \label{subsec:sssy}

These terms can easily be computed using the results in section \ref{subsec:suxyv}.

\paragraph{Term $\expec\{z_{u}(i)z_{a}(i)z_{b}(i)y_{n}(i)\}$} $u = 1,\ldots,N|\cD_n|$, , $a,b = 1,\ldots,N|\cD_n|$

We adapt relations \eqref{eq:zzzy1}--\eqref{eq:zzzy2}, thus obtaining:
\begin{align}
    \expec\{z_{u}(i)z_{a}(i)z_{b}(i)y_{n}(i)\} 
    \nonumber %\\ 
    {}={} & 
    % \begin{array}{c}
    \expec\bigg\{ 
        \dfrac{(y_{m_1}(i) - y_{m_1}(\omega_p))}{\sigma^2} 
        \dfrac{(y_{m_2}(i) - y_{m_2}(\omega_q))}{\sigma^2} 
        \dfrac{(y_{m_3}(i) - y_{m_3}(\omega_r))}{\sigma^2} 
        y_n(i)
        \\ &
        \times \exp \left( -\dfrac{1}{2\sigma^2} \left(\Vert \by(i) - \by(\omega_p) \Vert^2 + \Vert \by(i) - \by(\omega_q) \Vert^2 + \Vert \by(i) - \by(\omega_r) \Vert^2\right)\right)
        \bigg\}
        % \end{array}
        \label{eq:zzzy3}\\ {}={} &
        \dfrac{1}{\sigma^6} \nu(\{h_i\}_{i=1}^8),\, \text{with }
    \begin{cases} 
    h_1=h_2=m_1=\left\lceil \frac{u}{|\cD_n|} \right\rceil \\ h_3=h_4-N=m_2=\left\lceil \frac{a}{|\cD_n|} \right\rceil \\ h_5=h_6-2N=m_3=\left\lceil \frac{b}{|\cD_n|} \right\rceil \\ h_7=h_8-3N=m_4=n
    \end{cases}\,,
    \begin{cases} 
    \omega_p=\modop(u-1,|\cD_n|)+1 \\ \omega_q=\modop(a-1,|\cD_n|)+1 \\ \omega_r=\modop(b-1,|\cD_n|)+1 \\ \omega_s=\bullet
    \end{cases}\,.
    \label{eq:zzzy4}
\end{align}

\paragraph{Term $\expec\{z_{u}(i)k_{a}(i)z_{b}(i)y_{n}(i)\}$} $u = 1,\ldots,N|\cD_n|$, $a = 1,\ldots,|\cD_n|, b = 1,\ldots,N|\cD_n|$

We adapt relations \eqref{eq:zkzy1}--\eqref{eq:zkzy2}, thus obtaining:
\begin{align}
    \expec\{z_{u}(i)k_{a}(i)z_{b}(i)y_{n}(i)\} 
    \nonumber
    {}={} &
    % \begin{array}{c}
    \expec\bigg\{ 
        \dfrac{(y_{m_1}(i) - y_{m_1}(\omega_p))}{\sigma^2} 
        \dfrac{(y_{m_3}(i) - y_{m_3}(\omega_r))}{\sigma^2} 
        y_n(i)
        \\ &
        \times \exp \left( -\dfrac{1}{2\sigma^2} \left(\Vert \by(i) - \by(\omega_p) \Vert^2 + \Vert \by(i) - \by(\omega_q) \Vert^2 + \Vert \by(i) - \by(\omega_r) \Vert^2\right)\right)
        \bigg\}
        % \end{array} 
        \label{eq:zkzy3}\\ {}={} &
        \dfrac{1}{\sigma^4} \nu(\{h_i\}_{i=1}^8),\, \text{with }
    \begin{cases} 
    h_1=h_2=m_1=\left\lceil \frac{u}{|\cD_n|} \right\rceil \\ h_3=h_4-N=m_2=\bullet \\ h_5=h_6-2N=m_3=\left\lceil \frac{b}{|\cD_n|} \right\rceil \\ h_7=h_8-3N=m_4=n
    \end{cases}\,,
    \begin{cases} 
    \omega_p=\modop(u-1,|\cD_n|)+1 \\ \omega_q=a \\ \omega_r=\modop(b-1,|\cD_n|)+1 \\ \omega_s=\bullet
    \end{cases}\,.
    \label{eq:zkzy4}
\end{align}

\paragraph{Term $\expec\{z_{u}(i)k_{a}(i)k_{b}(i)y_{n}(i)\}$} $u = 1,\ldots,N|\cD_n|$, $a,b = 1,\ldots,|\cD_n|$

We adapt relations \eqref{eq:zkky1}--\eqref{eq:zkky2}, thus obtaining:
\begin{align}
    \expec\{z_{u}(i)k_{a}(i)k_{b}(i)y_{n}(i)\}
    \nonumber 
    {}={} &
    % \begin{array}{c}
    \expec\bigg\{ 
        \dfrac{(y_{m_1}(i) - y_{m_1}(\omega_p))}{\sigma^2} 
        y_n(i)
        \\ &
        \times\exp \left( -\dfrac{1}{2\sigma^2} \left(\Vert \by(i) - \by(\omega_p) \Vert^2 + \Vert \by(i) - \by(\omega_q) \Vert^2 + \Vert \by(i) - \by(\omega_r) \Vert^2\right)\right)
        \bigg\}
        % \end{array}
        \label{eq:zkky3}\\  {}={} &
        \dfrac{1}{\sigma^2} \nu(\{h_i\}_{i=1}^8),\, \text{with }
    \begin{cases} 
    h_1=h_2=m_1=\left\lceil \frac{u}{|\cD_n|} \right\rceil \\ h_3=h_4-N=m_2=\bullet \\ h_5=h_6-2N=m_3=\bullet \\ h_7=h_8-3N=m_4=n
    \end{cases}\,,
    \begin{cases} 
    \omega_p=\modop(u-1,|\cD_n|)+1 \\ \omega_q=a \\ \omega_r=b \\ \omega_s=\bullet
    \end{cases}\,.
    \label{eq:zkky4}
\end{align}

\paragraph{Term $\expec\{k_{u}(i)k_{a}(i)k_{b}(i)y_{n}(i)\}$} $u= 1,\ldots, |\cD_n|$, $a,b = 1,\ldots,|\cD_n|$

We adapt relations \eqref{eq:kkky1}--\eqref{eq:kkky2}, thus obtaining:
\begin{align}
    \expec\{k_{u}(i)k_{a}(i)k_{b}(i)k_{v}(i)\}
    % \nonumber\\ 
    {}={} & y_n(i) \exp \left( -\dfrac{1}{2\sigma^2} \left(\Vert \by(i) - \by(\omega_p) \Vert^2 + \Vert \by(i) - \by(\omega_q) \Vert^2 + \Vert \by(i) - \by(\omega_r) \Vert^2 \right)\right) \bigg\} 
    \\ {}={} & \nu(\{h_i\}_{i=1}^8),\, \text{with }
    \begin{cases} 
    h_1=h_2=m_1=\bullet \\ h_3=h_4-N=m_2=\bullet \\ h_5=h_6-2N=m_3=\bullet \\ h_7=h_8-3N=m_4=n
    \end{cases}\,,
    \begin{cases} 
    \omega_p=u \\ \omega_q=a \\ \omega_r=b \\ \omega_s=\bullet
    \end{cases}\,.
\end{align}

%=========================================================================
%\newpage

\section{Cases corresponding to \texorpdfstring{$\expec\{s_a(i) s_{b}(i) y_n^2(i)\}$ ($c_1=c_2=0,c_3=1$)}{E\{sa(i)sb(i)yn(i)yn(i)\} (c1=c2=0,c3=1)}} \label{subsec:ssyy}

These terms can easily be computed using the results in section \ref{subsec:sssy}.

\paragraph{Term $\expec\{y_{n}(i)z_{a}(i)z_{b}(i)y_{n}(i)\}$}$\,$, $a,b = 1,\ldots,N|\cD_n|$

We adapt relations \eqref{eq:zzzy3}--\eqref{eq:zzzy4}, thus obtaining:
\begin{align}
    \expec\{y_{n}(i)z_{a}(i)z_{b}(i)y_{n}(i)\}
    \nonumber
    % \\
    {}={} & 
    % \begin{array}{c}
    \expec\bigg\{ 
        y_n(i)
        \dfrac{(y_{m_2}(i) - y_{m_2}(\omega_q))}{\sigma^2} 
        \dfrac{(y_{m_3}(i) - y_{m_3}(\omega_r))}{\sigma^2} 
        y_n(i)
        \\ &
        \times \exp \left( -\dfrac{1}{2\sigma^2} \left(\Vert \by(i) - \by(\omega_q) \Vert^2 + \Vert \by(i) - \by(\omega_r) \Vert^2\right)\right)
        \bigg\}
        % \end{array}
        \label{eq:zzzy5}\\  {}={} &
        \dfrac{1}{\sigma^4} \nu(\{h_i\}_{i=1}^8),\, \text{with }
    \begin{cases} 
    h_1=h_2=m_1=n \\ h_3=h_4-N=m_2=\left\lceil \frac{a}{|\cD_n|} \right\rceil \\ h_5=h_6-2N=m_3=\left\lceil \frac{b}{|\cD_n|} \right\rceil \\ h_7=h_8-3N=m_4=n
    \end{cases}\,,
    \begin{cases} 
    \omega_p=\bullet \\ \omega_q=\modop(a-1,|\cD_n|)+1 \\ \omega_r=\modop(b-1,|\cD_n|)+1 \\ \omega_s=\bullet
    \end{cases}\,.
    \label{eq:zzzy6}
\end{align}

\paragraph{Term $\expec\{y_{n}(i)k_{a}(i)z_{b}(i)y_{n}(i)\}$}$\,$ $a = 1,\ldots,|\cD_n|$, $b = 1,\ldots,N|\cD_n|$

We adapt relations \eqref{eq:zkzy3}--\eqref{eq:zkzy4}, thus obtaining:
\begin{align}
    \expec\{y_{n}(i)k_{a}(i)z_{b}(i)y_{n}(i)\} 
    % \nonumber
    % \\ 
    & = 
    % \begin{array}{c}
    \expec\bigg\{ 
        y_n(i) 
        \dfrac{(y_{m_3}(i) - y_{m_3}(\omega_r))}{\sigma^2} 
        y_n(i)
        % \\
        \exp \left( -\dfrac{1}{2\sigma^2} \left(\Vert \by(i) - \by(\omega_q) \Vert^2 + \Vert \by(i) - \by(\omega_r) \Vert^2\right)\right)
        \bigg\}
        % \end{array}
        \\ & =
        \dfrac{1}{\sigma^2} \nu(\{h_i\}_{i=1}^8),\, \text{with }
    \begin{cases} 
    h_1=h_2=m_1=n \\ h_3=h_4-N=m_2=\bullet \\ h_5=h_6-2N=m_3=\left\lceil \frac{b}{|\cD_n|} \right\rceil \\ h_7=h_8-3N=m_4=n
    \end{cases}\,,
    \begin{cases} 
    \omega_p=\bullet \\ \omega_q=a \\ \omega_r=\modop(b-1,|\cD_n|)+1 \\ \omega_s=\bullet
    \end{cases}\,.
\end{align}

\paragraph{Term $\expec\{y_{n}(i)k_{a}(i)k_{b}(i)y_{n}(i)\}$}$\,$ $a,b = 1,\ldots,|\cD_n|$

We adapt relations \eqref{eq:zkky3}--\eqref{eq:zkky4}, thus obtaining:
\begin{align}
    \expec\{y_{n}(i)k_{a}(i)k_{a}(i)y_{n}(i)\} & = 
    % \begin{array}{c}
    \expec\bigg\{ 
        y_n(i)
        y_n(i)
        % \\
        \exp \left( -\dfrac{1}{2\sigma^2} \left(\Vert \by(i) - \by(\omega_q) \Vert^2 + \Vert \by(i) - \by(\omega_r) \Vert^2\right)\right)
        \bigg\}
        % \end{array}
        \label{eq:zkzy5}\\ & =
        \nu(\{h_i\}_{i=1}^8),\, \text{with }
    \begin{cases} 
    h_1=h_2=m_1=n \\ h_3=h_4-N=m_2=\bullet \\ h_5=h_6-2N=m_3=\bullet \\ h_7=h_8-3N=m_4=n
    \end{cases}\,,
    \begin{cases} 
    \omega_p=\bullet \\ \omega_q=a \\ \omega_r=b \\ \omega_s=\bullet
    \end{cases}\,.
    \label{eq:zkzy6}
\end{align}

%=======================================================================
%\newpage

\section{Cases corresponding to \texorpdfstring{$\expec\{s_b(i) y_n(i)\}$ ($c_1=c_2=c_3=0$)}{E\{sb(i)yn(i)\} (c1=c2=0=c3=0)}} \label{subsec:sy}

These terms can easily be computed using the results in section \ref{subsec:ssyy}.

\paragraph{Term $\expec\{z_{b}(i)y_{n}(i)\}$}$\,$, $b = 1,\ldots,N|\cD_n|$

We adapt relations \eqref{eq:zzzy5}--\eqref{eq:zzzy6}, thus obtaining:
\begin{align}
    \expec\{z_{b}(i)y_{n}(i)\} &= 
    %\sum_{y=1}^{N|\cD_n|}
    \expec\bigg\{ 
        \dfrac{(y_{m_3}(i) - y_{m_3}(\omega_r))}{\sigma^2} 
        y_n(i)
        \exp \left( -\dfrac{1}{2\sigma^2} \left(\Vert \by(i) - \by(\omega_r) \Vert^2\right)\right)
        \bigg\}
        \\ & =
        %\sum_{y=1}^{N|\cD_n|}
        \dfrac{1}{\sigma^2} \nu(\{h_i\}_{i=1}^8),\, \text{with }
    \begin{cases} 
    h_1=h_2=m_1=\bullet \\ h_3=h_4-N=m_2=\bullet \\ h_5=h_6-2N=m_3=\left\lceil \frac{y}{|\cD_n|} \right\rceil \\ h_7=h_8-3N=m_4=n
    \end{cases}\,,
    \begin{cases} 
    \omega_p=\bullet \\ \omega_q=\bullet \\ \omega_r=\modop(b-1,|\cD_n|)+1 \\ \omega_s=\bullet
    \end{cases}\,.
\end{align}

\paragraph{Term $\expec\{k_{y}(i)y_{n}(i)\}$}$\,$, $b = 1,\ldots,|\cD_n|$

We adapt relations \eqref{eq:zkzy5}--\eqref{eq:zkzy6}, thus obtaining:
\begin{align}
    \expec\{k_{b}(i)y_{n}(i)\} & = 
    %\sum_{y=1}^{|\cD_n|}
    \expec\bigg\{ 
        y_n(i)
        \exp \left( -\dfrac{1}{2\sigma^2} \left(\Vert \by(i) - \by(\omega_r) \Vert^2\right)\right)
        \bigg\} \\& =
        %\sum_{y=1}^{|\cD_n|} 
        \nu(\{h_i\}_{i=1}^8),\, \text{with }
    \begin{cases} 
    h_1=h_2=m_1=\bullet \\ h_3=h_4-N=m_2=\bullet \\ h_5=h_6-2N=m_3=\bullet \\ h_7=h_8-3N=m_4=n
    \end{cases}\,,
    \begin{cases} 
    \omega_p=\bullet \\ \omega_q=\bullet \\ \omega_r=b \\ \omega_s=\bullet
    \end{cases}\,.
\end{align}

\fi

\end{document}